\documentclass[sigplan,screen,nonacm]{acmart}
\usepackage[utf8]{inputenc}
\usepackage{hyperref}
\usepackage{algorithmic}
\usepackage{graphicx}
\usepackage{textcomp}
\usepackage{xcolor}
\usepackage{soul}
\usepackage{algorithm}
\usepackage{fancyvrb}
\usepackage{pifont}
\usepackage{subcaption}
\VerbatimFootnotes
\usepackage{booktabs}
\usepackage{url}
\newcommand{\RR}{\mathbb{R}}
\usepackage{microtype}
\usepackage{multirow}
\usepackage{enumitem}

\usepackage{amsmath}

\usepackage{diagbox}
\usepackage{amsfonts}

\DeclareMathOperator{\concat}{concat}
\DeclareMathOperator{\diag}{diag}
\DeclareMathOperator{\fl}{fl}
\DeclareMathOperator{\flDot}{flDot}
\DeclareMathOperator{\flMatMul}{flMatMul}
\newcommand{\norm}[1]{\left\lVert #1 \right\rVert}
\newcommand{\blo}{\operatorname{blo}}
\newcommand{\bhi}{\operatorname{bhi}}
\newtheorem{assumption}{Assumption}[section]

\usepackage{enumitem}
\setlist{nosep}

\renewcommand\footnotetextcopyrightpermission[1]{} % removes the ACM copyright footnote

\begin{abstract}
Divide and Conquer (D\&C) is a widely used algorithmic strategy for symmetric eigenvalue decomposition. Its natural parallelism makes D\&C attractive on modern multicore CPUs and GPUs, but existing eigenvalue-only routines often default to QR-based methods because conventional D\&C still materializes or replays large transformation matrices during the conquer phase. This paper proposes a boundary-row D\&C algorithm for eigenvalue-only computation. The key observation is that the conquer phase only needs selected boundary rows/columns rather than the full accumulated eigenvector matrix. By propagating these boundary rows directly through the recursion, the proposed algorithm reduces the memory requirement from quadratic to linear space while also eliminating unnecessary matrix-vector work in the conventional lazy-replay formulation. We provide the algorithm, its time and space complexity analysis, correctness and stability arguments, optimized CPU and GPU implementations, and an evaluation against QR and D\&C routines in standard numerical libraries.
\end{abstract}

\begin{document}

\title{Reducing Internal State in Eigenvalue-Only Divide-and-Conquer Tridiagonal Eigensolvers}

\author{Ruiyi Zhan}
\email{rui1@std.uestc.edu.cn}
\affiliation{%
  \institution{University of Electronic Science and Technology of China}
  \city{Chengdu}
  \country{China}
}

\author{Shaoshuai Zhang}
\authornote{Corresponding author.}
\email{szhang94@uestc.edu.cn}
\affiliation{%
  \institution{University of Electronic Science and Technology of China}
  \city{Chengdu}
  \country{China}
}
\renewcommand{\shortauthors}{Zhan and Zhang}

\maketitle

\section{Introduction}

Symmetric eigenvalue decomposition (EVD) is a core primitive in numerical linear algebra and scientific computing. Given a real symmetric matrix $A \in \RR^{n \times n}$, EVD computes $A = Q\Lambda Q^T$, where $\Lambda$ stores the eigenvalues and $Q$ stores the corresponding orthonormal eigenvectors. Many applications use this spectral information in quantum chemistry~\cite{QuantumChemistry}, physics~\cite{grimes1987solution, probert2011electronic}, machine learning, and signal processing~\cite{xsvm, tensorsvm, shampoo, PCA}. In a substantial fraction of these workloads, however, the application needs only the eigenvalues, or needs the eigenvalues before deciding whether eigenvectors are necessary. This eigenvalue-only setting changes the optimization target: the solver should not pay quadratic memory cost for eigenvector data that will not be returned.

Dense symmetric EVD is normally reduced to a tridiagonal eigenproblem before the final spectral solve. LAPACK describes the standard pipeline as first reducing $A$ to a real symmetric tridiagonal matrix $T$, and then solving the tridiagonal problem for eigenvalues and, optionally, eigenvectors~\cite{LAPACK,MatrixComputation}. This separation makes the tridiagonal eigensolver a critical kernel in full EVD implementations. Once the problem reaches $T$, classical QR/QL routines are attractive for eigenvalue-only computation because they update only the diagonal and off-diagonal arrays and require little auxiliary storage~\cite{QRAlgorithm,LAPACK}. Their limitation is performance: the computation is sequential in nature and exposes much less coarse-grained parallelism than D\&C.

Divide-and-conquer (D\&C) is a more parallel alternative for the symmetric tridiagonal eigenproblem. Cuppen's method recursively splits $T$ into two smaller tridiagonal matrices plus a rank-one correction, solves the child problems, and merges them by solving a secular equation~\cite{Cuppen1981,gu1995divide,rank1EVD,secularEquation}. The merge has substantial parallel work across secular roots and, when eigenvectors are requested, uses matrix operations to combine child eigenvectors into the parent eigenvectors. This structure explains why production libraries include D\&C drivers and why D\&C can be much faster than QR for large eigensystems~\cite{LAPACK,DivideAndConquer}. Unfortunately, the same structure also explains why D\&C is rarely the default choice for eigenvalue-only computation, because conventional D\&C stores or constructs dense eigenvector information inside the recursion, leading to $O(n^2)$ space complexity. LAPACK \texttt{stedc} routine~\footnote{\url{https://netlib.org/lapack//explore-html/d3/d57/group__stedc_gaec55368cca7558e3ac13e04c1347bc27.html}} also comments that the default eigenvalue-only routine is set to QR algorithm (\texttt{steqr}), due to the heavy memory cost of D\&C.

% The motivation for this work comes from a D\&C-specific performance anomaly in cuSOLVER~\footnote{\url{https://docs.nvidia.com/cuda/cusolver/index.html}}. cuSOLVER provides a highly optimized D\&C path for the tridiagonal eigenproblem, but the implementation is closed source. In our measurements, its performance drops sharply at $n=49152$ compared with $n=32768$, even though the mathematical problem changes only by size. This cliff led us to implement and inspect our own D\&C solver. The resulting observation is that an apparently small $O(n)$ object is performance-critical: the boundary rows needed to form future rank-one merge vectors. Whether this boundary-row working set fits in shared memory determines whether D\&C can reuse it at low latency or must move it through the slower memory hierarchy. This is why the eigenvalue-only variant should be organized around boundary rows rather than around dense eigenvector state.

The memory mismatch comes from the vector required by each rank-one merge. After splitting the tridiagonal matrix, the rank-one update is expressed in the eigenbasis of the two child subproblems. The update vector is obtained from the last row of the left child eigenvector matrix and the first row of the right child eigenvector matrix. Standard D\&C therefore keeps enough child eigenvector information to produce these boundary rows at later levels. Even if the final output contains only eigenvalues, the internal algorithm still carries dense transformation data through the D\&C tree. Existing lazy-replay formulations reduce unnecessary eager multiplication, but they still represent the accumulated orthogonal transformations and therefore retain a quadratic memory footprint in practical implementations.

This work asks whether eigenvalue-only D\&C actually needs the dense child eigenvector matrices, and our answer is no. For the purpose of future D\&C merges, each child subproblem only needs to expose the boundary row or column that participates in the next rank-one update. We therefore introduce a boundary-row D\&C algorithm that propagates exactly this information through the recursion. Instead of constructing or replaying the full eigenvector matrix, the algorithm updates the boundary rows needed by ancestor merges. This changes the memory behavior of eigenvalue-only D\&C from storing dense intermediate transformations to storing linear-size spectral and boundary data, while preserving the same secular-equation structure as conventional D\&C.

The resulting algorithm makes D\&C practical for memory-constrained eigenvalue-only workloads on CPUs and GPUs. It retains the parallel merge structure that motivates D\&C, but removes the main reason libraries fall back to QR-like routines when only eigenvalues are requested. The idea is especially important on GPUs, where a quadratic workspace can determine whether a large tridiagonal problem fits in device memory at all. More broadly, our work shows that the eigenvalue-only variant of D\&C should not be treated as a degenerate eigenvector algorithm; it has a smaller data dependency that can be exploited directly.

The rest of the paper follows this data dependency from library behavior to algorithm, implementation, and measurement. We first review how existing tridiagonal solvers expose the QR-versus-D\&C trade-off, then formalize the boundary-row state that replaces dense replay, implement the same state contract on CPU and GPU paths, and finally measure the resulting tridiagonal-stage prototypes against QR/QL, internal values-only D\&C, and cuSOLVER D\&C baselines.

This paper makes the following contributions:
\begin{itemize}
    \item We identify the boundary-row dependency that causes conventional eigenvalue-only D\&C to retain dense eigenvector state.
    \item We propose a boundary-row D\&C algorithm that computes all eigenvalues of a symmetric tridiagonal matrix using linear auxiliary memory.
    \item We prove that the propagated boundary rows produce the same secular-equation inputs as conventional D\&C and analyze the resulting time and space complexity.
    \item We present CPU and GPU implementations that expose D\&C parallelism while avoiding full eigenvector storage.
    \item We evaluate CPU and GPU tridiagonal-stage prototypes against QR, internal values-only D\&C, and cuSOLVER D\&C routines across problem sizes and spectral structures.
\end{itemize}

\section{Background and Related Work}

\subsection{Symmetric EVD and Tridiagonal Solvers}

Dense symmetric EVD is conventionally organized as a reduction phase followed by a tridiagonal eigensolve. For a real symmetric matrix $A$, orthogonal transformations reduce the problem to $A = Q_A T Q_A^T$, where $T$ is real symmetric and tridiagonal~\cite{MatrixComputation,LAPACK,DBBR}. The eigenvalues of $T$ are also the eigenvalues of $A$. If $T = Q_T \Lambda Q_T^T$ is computed, then the eigenvectors of the original dense matrix are $Q = Q_A Q_T$. This paper focuses on the tridiagonal stage, and specifically on the case where only $\Lambda$ is required.

LAPACK exposes several algorithmic choices for the symmetric tridiagonal eigensolvers. The QR/QL family includes routines such as \texttt{xSTERF}, which computes all eigenvalues without eigenvectors using a square-root-free QR/QL variant~\cite{LAPACK,QRAlgorithm}. Bisection computes selected or all eigenvalues by locating roots in intervals~\cite{Bisection,LAPACK}. Relatively robust representations (MRRR) compute eigenpairs through carefully chosen shifted factorizations and are designed for high accuracy and low workspace~\cite{MRRR,MRRR2}. D\&C, implemented in routines such as \texttt{xSTEDC}, targets all eigenvalues and eigenvectors and can be much faster than QR for large problems, but it requires substantially more workspace~\cite{LAPACK,gu1995divide}.

GPU libraries expose similar choices, but their public interfaces do not provide a low-memory eigenvalue-only D\&C path. NVIDIA cuSOLVER provides symmetric eigensolver routines with options for eigenvalues only and eigenpairs, and its documentation states that eigenvectors are computed by D\&C when requested~\cite{cuSOLVERDocs}. However, cuSOLVER is distributed as a closed-source library, so the internal eigenvalue-only path cannot be audited or modified.

MAGMA provides a source-available LAPACK-style baseline~\cite{MAGMA,MAGMADocs}. Its \texttt{magma\_dsyevd} and \texttt{magma\_dsyevd\_gpu} implementations show a sharp split. After reducing the dense matrix to tridiagonal form, \texttt{MagmaNoVec} calls LAPACK \texttt{dsterf}, whereas \texttt{MagmaVec} calls MAGMA's D\&C tridiagonal solver \texttt{magma\_dstedx} and then applies the accumulated transformations~\cite{MAGMASource}. The workspace requirements reflect the same design choice: the eigenvalue-only path uses linear workspace for the dense reduction and QR/QL tridiagonal solve, while the eigenvector path requires quadratic workspace for the D\&C eigenvector state~\cite{MAGMADocs,MAGMASource}. MAGMA therefore confirms the gap addressed by this paper, that D\&C is available in production software, but primarily as an eigenvector-producing algorithm rather than as a low-memory eigenvalue-only algorithm.

\subsection{Divide-and-Conquer for Tridiagonal EVD}

D\&C solves the tridiagonal eigenproblem by recursively reducing it to diagonal-plus-rank-one problems. Given an irreducible tridiagonal matrix $T$, the algorithm splits it at an off-diagonal entry and writes
\begin{equation}
    T =
    \begin{bmatrix}
        T_1 & 0 \\
        0 & T_2
    \end{bmatrix}
    + \rho u u^T,
\end{equation}
where $T_1$ and $T_2$ are smaller tridiagonal matrices and $u$ has nonzeros only at the split boundary~\cite{Cuppen1981,gu1995divide}. If the child decompositions are
\begin{equation}
    T_i = Q_i \Lambda_i Q_i^T,\quad i \in \{1,2\},
\end{equation}
then the parent merge becomes
\begin{equation}
    D + \rho v v^T,
    \quad
    D = \Lambda_1 \oplus \Lambda_2,
    \quad
    v =
    \begin{bmatrix}
        \pm Q_1^T e_m \\
        Q_2^T e_1
    \end{bmatrix},
\end{equation}
where $\Lambda_1\oplus\Lambda_2$ is $\begin{bmatrix}
        \Lambda_1 & 0 \\
        0 & \Lambda_2
    \end{bmatrix}$, $e_1=[1, 0,...,0]^T$ and $e_m=[0,...,0,1]^T$. Thus, the merge depends on the last row of $Q_1$ and the first row of $Q_2$. The eigenvalues of the parent are then the roots of a secular equation associated with $D+\rho vv^T$~\cite{rank1EVD,secularEquation}.

After the secular equation is solved, conventional D\&C constructs the parent eigenvector matrix by multiplying the block-diagonal child eigenvector matrix with the eigenvectors of the diagonal-plus-rank-one problem. This step is the source of both D\&C's performance advantage and its memory cost. It uses dense matrix operations and exposes parallelism, especially near the top of the recursion tree, but it also requires dense eigenvector matrices at intermediate levels. LAPACK documentation explicitly notes that D\&C can be many times faster than QR for large matrices while requiring quadratic workspace~\cite{LAPACK}. This trade-off is appropriate when eigenvectors are requested, but it is wasteful when the user asks only for eigenvalues.

\subsection{Eigenvalue-Only D\&C and Lazy Replay}

Eigenvalue-only D\&C tries to avoid forming final eigenvectors, but the merge vector $v$ still creates an implicit dependency on child eigenvectors. A straightforward implementation can compute $v$ by constructing child eigenvector matrices, using their boundary rows, and discarding the rest. This defeats the purpose of an eigenvalue-only routine because it stores and manipulates data whose only role is to recover a small number of rows. Lazy replay improves on this baseline by postponing the application of transformations and replaying them only when boundary information is needed. It reduces some unnecessary eager work, but the replay state still represents products of dense orthogonal transformations through the recursion. Thus lazy replay identifies the relevant dependency and reduces work relative to eagerly forming all intermediate eigenvectors, but it does not by itself give a low-memory eigenvalue-only D\&C formulation.

The next section turns this dependency question into the algorithmic invariant used throughout the paper: every merge must preserve exactly the boundary information needed by future merges, and no full eigenvector block should be part of the persistent values-only state.

\section{Method}

The background reduces the eigenvalue-only D\&C problem to a precise data-dependency question: which part of the child eigenvector basis is required by later secular merges? The key observation behind our method is that lazy replay keeps too much information for the eigenvalue-only problem. At each merge, the parent does not require the full child basis; it only requires the two boundary rows that define the secular vector. Once the secular equation has been solved, ancestors again require only boundary rows of the new parent basis. Therefore the data dependency is closed under boundary-row propagation: boundary rows of the parent can be computed from boundary rows and local merge information from the children. This observation separates eigenvalue-only D\&C from full-eigenvector D\&C and creates an opportunity to reduce memory from quadratic to linear.

This section turns the observation into an algorithmic invariant. It defines boundary-row state, proves that this state is sufficient to reproduce the same secular problems as conventional D\&C, and derives the resulting work and storage model.

\subsection{Problem Setting}

The input to the tridiagonal stage is a real symmetric tridiagonal matrix $T\in\mathbb{R}^{n\times n}$. A conventional D\&C merge splits $T$ into two child tridiagonal matrices and a rank-one coupling,
\begin{equation}
    T =
    \begin{bmatrix}
        T_L & 0 \\
        0 & T_R
    \end{bmatrix}
    + \rho u u^T,
    \label{eq:dc-split}
\end{equation}
where $T_L\in\mathbb{R}^{n_L\times n_L}$, $T_R\in\mathbb{R}^{n_R\times n_R}$, and $u$ is nonzero only at the two split-boundary coordinates. If
\begin{equation}
    T_L=Q_L\Lambda_LQ_L^T,\qquad
    T_R=Q_R\Lambda_RQ_R^T,
\end{equation}
then the parent merge is reduced to a diagonal-plus-rank-one eigenproblem
\begin{align}
    A_v &= \diag(D)+\rho zz^T, \nonumber\\
    D &= \Lambda_L\oplus\Lambda_R, \label{eq:merge-problem}\\
    z &= \concat(\bhi(Q_L),\blo(Q_R)). \nonumber
\end{align}
Here $\blo(Q)$ and $\bhi(Q)$ denote the first and last rows of $Q$. The key point is that the parent secular equation needs the child spectra and only two child boundary rows; it does not need the complete child eigenvector matrices.

\begin{figure}[t]
\centering
\includegraphics[width=\columnwidth]{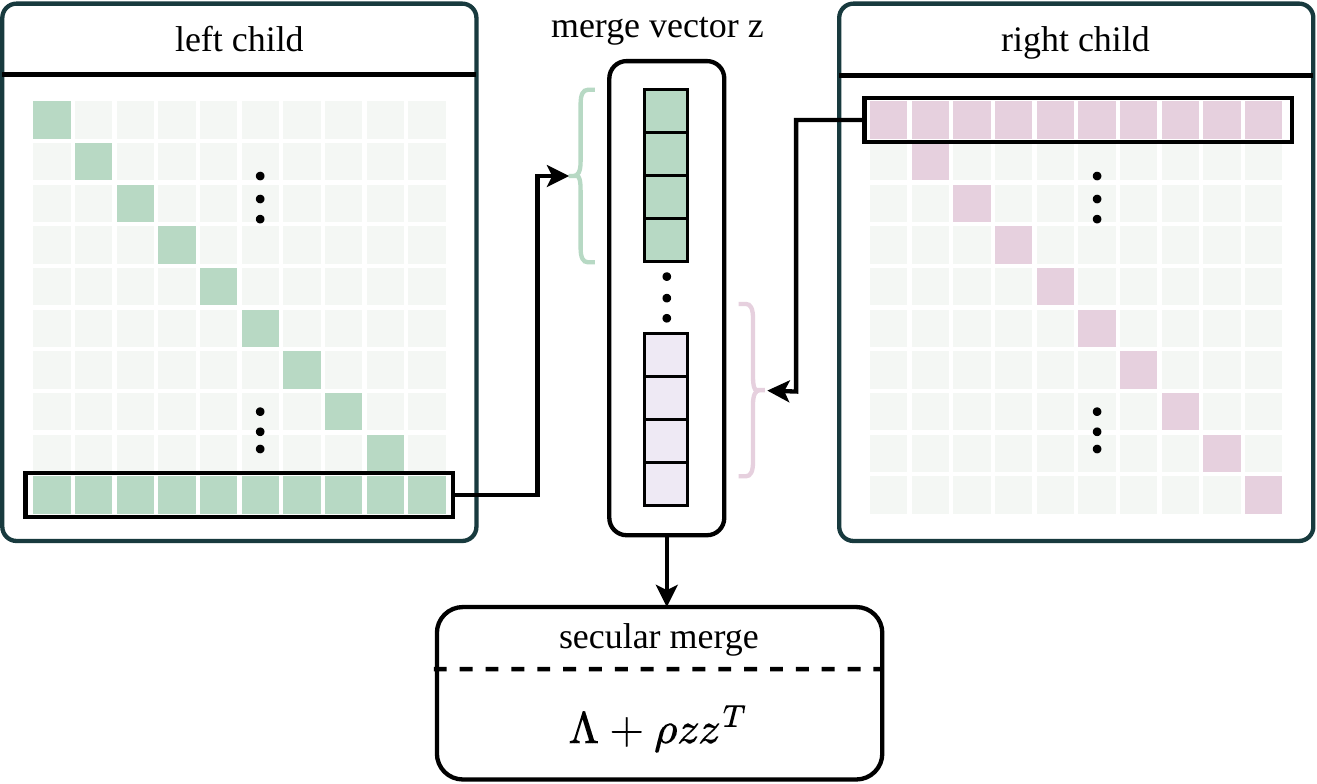}
\caption{Data dependency of one D\&C merge.  The highlighted row of the left child represents \(\bhi(Q_L)\), and the highlighted row of the right child represents \(\blo(Q_R)\).  Concatenating these two rows forms the secular vector \(z\), so the merge needs only boundary rows rather than complete child eigenvector matrices.}
\label{fig:merge-boundary-dependency}
\Description{A diagram showing that the last row of the left child eigenvector basis and the first row of the right child eigenvector basis are concatenated to form the secular vector for a divide-and-conquer merge.}
\end{figure}

\subsection{Boundary-Row State}
\label{sec:boundary-row-state}

BR replaces the dense eigenvector state in eigenvalue-only D\&C with a boundary-row state. For each node $v$ in the merge tree, let $Q_v$ be the eigenvector matrix that conventional D\&C would construct for the corresponding subproblem. BR stores
\begin{equation}
    B_v=(\blo(Q_v),\bhi(Q_v)).
    \label{eq:boundary-state}
\end{equation}
For an internal node, the left component required by its parent is a selected row of the parent eigenvector block; the right component is another selected row. These selected rows can be computed directly from child selected rows and the local secular-vector block, rather than by forming all rows of $Q_v$.

\begin{figure}[t]
\centering
\includegraphics[width=\columnwidth]{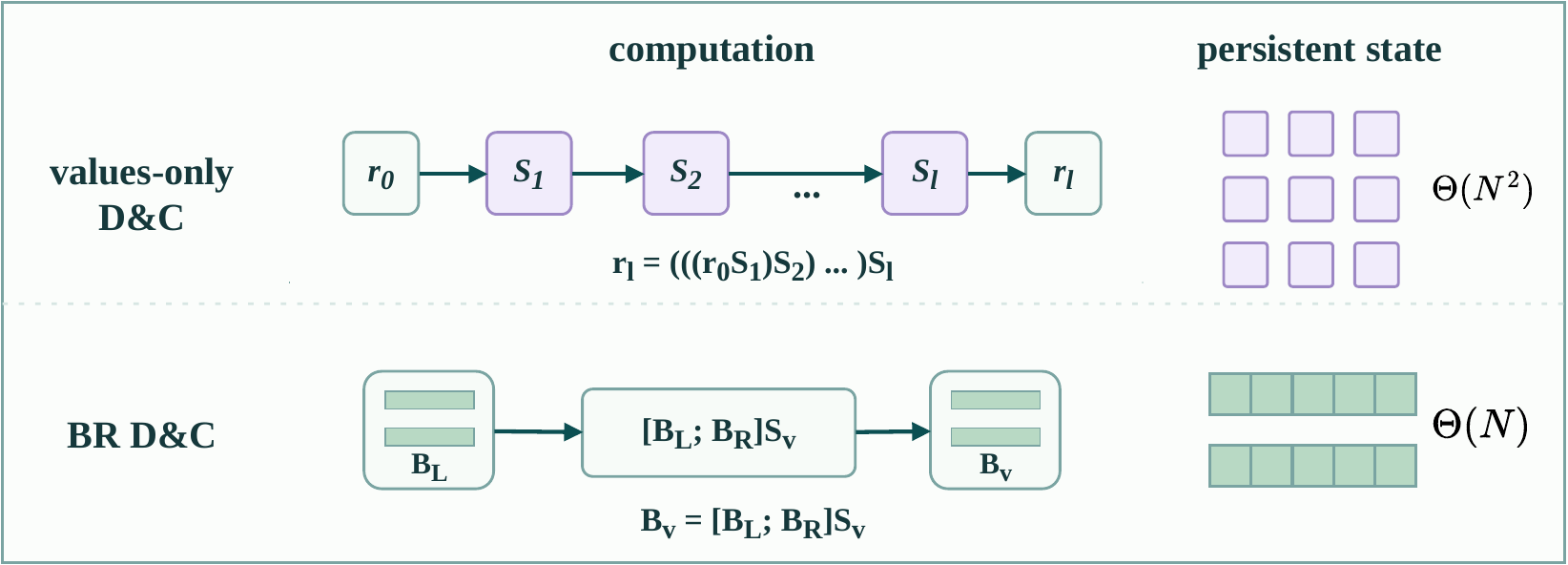}
\caption{State and computation carried between D\&C levels.  The internal values-only D\&C path reconstructs a requested boundary row by replaying a chain of GEMV-like transformations, \(r_\ell=(((r_0S_1)S_2)\cdots)S_\ell\), while retaining dense or replayable state with quadratic footprint.  BR instead propagates boundary rows by the local recurrence \(B_v=[B_L;B_R]S_v\), so only linear-size boundary state is persistent.}
\label{fig:state-comparison}
\Description{A side-by-side comparison: internal values-only divide-and-conquer reconstructs boundary rows through a chain of GEMV transformations and retains quadratic replay state, while boundary-row divide-and-conquer updates boundary rows by a local recurrence and retains linear state.}
\end{figure}

\begin{algorithm}[t]
\caption{Boundary-row D\&C for eigenvalue-only tridiagonal EVD}
\label{alg:boundary-row-dc}
\begin{algorithmic}[1]
\REQUIRE Symmetric tridiagonal subproblem $T_v$ and requested local rows $\sigma_v$
\ENSURE Eigenvalues $\Lambda_v$ and selected rows $(Q_v)_{\sigma_v}$
\IF{$T_v$ is a leaf problem}
    \STATE Solve the leaf by a small tridiagonal eigensolver.
    \STATE Return $\Lambda_v$ and the requested rows of the leaf eigenvector matrix.
\ENDIF
\STATE Split $T_v$ into left and right child problems.
\STATE Map $\sigma_v$ and split-boundary requests to child row lists $\sigma_L,\sigma_R$.
\STATE Recursively compute $(\Lambda_L,(Q_L)_{\sigma_L})$ and $(\Lambda_R,(Q_R)_{\sigma_R})$.
\STATE Form $D=\Lambda_L\oplus\Lambda_R$ and $z=\concat(\bhi(Q_L),\blo(Q_R))$.
\STATE Solve the local secular equations for $\Lambda_v$ and the local transform $S_v$ needed by requested rows.
\STATE Compute $(Q_v)_{\sigma_v}=(Q_L\oplus Q_R)_{\sigma_v}S_v$ by selected-row updates.
\RETURN $\Lambda_v$ and $(Q_v)_{\sigma_v}$
\end{algorithmic}
\end{algorithm}

At the root, only eigenvalues are returned. Boundary rows are nevertheless propagated through internal nodes because they are the information needed by ancestor secular equations. In implementation, the requested row list $\sigma_v$ contains exactly the rows demanded by ancestors and the split-boundary rows needed by the current parent; duplicate and unordered requests are allowed and are handled by metadata rather than by materializing full matrices.

\subsubsection{Algebraic Correctness}

\begin{lemma}[Boundary sufficiency]
\label{lem:boundary-sufficiency}
For any D\&C merge with child eigenvector matrices $Q_L$ and $Q_R$, the secular vector $z$ in Eq.~\eqref{eq:merge-problem} is fully determined by $\bhi(Q_L)$ and $\blo(Q_R)$.
\end{lemma}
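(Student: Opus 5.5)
The plan is to compute $z$ directly from the split in Eq.~\eqref{eq:dc-split} and check that only the two boundary rows enter. Write $m=n_L$. In the Cuppen split, $T_L$ and $T_R$ are the diagonal blocks of $T$ with their corner entries modified by $\rho$ (up to the usual sign convention for $\beta=T_{m,m+1}$). The coupling vector is $u=e_m\oplus e_1$, or $u=e_m\oplus(\pm e_1)$ depending on the sign convention, with $e_m\in\RR^{n_L}$ and $e_1\in\RR^{n_R}$. This is the standard identity behind Eq.~\eqref{eq:merge-problem}, which we take from the background.

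First, substitute the child decompositions:
\begin{equation*}
T=(Q_L\oplus Q_R)\bigl(\Lambda_L\oplus\Lambda_R+\rho\,zz^T\bigr)(Q_L\oplus Q_R)^T,
\qquad z=(Q_L\oplus Q_R)^Tu .
\end{equation*}
This uses only the orthogonality of $Q_L\oplus Q_R$. Second, evaluate $z$ blockwise: $z=\concat(Q_L^Te_m,\;Q_R^Te_1)$. Third, note that $Q_L^Te_m$ is the transpose of the $m$-th (last) row of $Q_L$, which is $\bhi(Q_L)$, and $Q_R^Te_1$ is the transpose of the first row of $Q_R$, which is $\blo(Q_R)$. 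Hence $z=\concat(\bhi(Q_L),\blo(Q_R))$, possibly with a fixed sign on one block. Every entry of $z$ is therefore a coordinate of one of these two rows, and no other entry of $Q_L$ or $Q_R$ appears. This gives the claimed determination.

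The only delicate step is bookkeeping rather than mathematics, and it has two parts. The first is the sign convention: depending on whether $\rho=\beta$ or $\rho=|\beta|$, the right block may carry a factor $\pm1$. That factor is a function of $T$ alone, not of $Q_L$ or $Q_R$, so it does not affect the sufficiency statement. The second is that $Q_L$ and $Q_R$ are defined only up to column signs and, for repeated eigenvalues, up to rotations within eigenspaces. Here I would state the lemma for whichever fixed $Q_L$ and $Q_R$ conventional D\&C produces. For those matrices, $z$ is a function of their boundary rows. This is exactly what is needed later, when the propagated rows $B_v$ are shown to coincide with those rows.
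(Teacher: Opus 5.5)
Your proposal is correct and is essentially the paper's proof: both write $z=(Q_L\oplus Q_R)^Tu$ with $u$ supported on the two split coordinates, and both identify $Q_L^Te_{n_L}$ and $Q_R^Te_1$ with $\bhi(Q_L)$ and $\blo(Q_R)$. Your extra bookkeeping on the coupling sign and on fixing the particular $Q_L,Q_R$ produced by conventional D\&C is harmless and consistent with the paper, which puts a $\pm$ on this vector in the background and fixes a column-sign/permutation convention in the appendix.
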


\begin{proof}
The rank-one coupling in Eq.~\eqref{eq:dc-split} acts only on the last coordinate of the left child and the first coordinate of the right child. After transforming to the child eigenvector bases, these two coordinate vectors become $Q_L^Te_{n_L}$ and $Q_R^Te_1$. Their entries are precisely the last row of $Q_L$ and the first row of $Q_R$, giving Eq.~\eqref{eq:merge-problem}.
\end{proof}

\begin{lemma}[Selected-row multiplication]
\label{lem:selected-row}
Let $Q\in\mathbb{R}^{m\times k}$, $S\in\mathbb{R}^{k\times \ell}$, and let $\sigma$ be any row list, possibly unordered and with repetitions. Then
\begin{equation}
    (QS)_\sigma = Q_\sigma S.
    \label{eq:selected-row}
\end{equation}
\end{lemma}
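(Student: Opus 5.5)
The identity is a restatement of how matrix multiplication acts row by row, so the plan is to compare the two sides entrywise after fixing notation for a row selection.

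First, fix the meaning of $Q_\sigma$. For a row list $\sigma=(\sigma_1,\dots,\sigma_p)$ with each $\sigma_i\in\{1,\dots,m\}$, let $Q_\sigma\in\mathbb{R}^{p\times k}$ be the matrix whose $i$-th row is row $\sigma_i$ of $Q$. Equivalently, $Q_\sigma = P_\sigma Q$, where $P_\sigma\in\mathbb{R}^{p\times m}$ has $i$-th row $e_{\sigma_i}^T$. This definition needs no ordering or distinctness of the $\sigma_i$. The matrix $P_\sigma$ is only a selection matrix; it need not be a permutation, and repeated indices simply give repeated rows. Hence unordered lists and repetitions require no separate treatment.

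Next, apply this definition to $QS\in\mathbb{R}^{m\times\ell}$. This gives $(QS)_\sigma = P_\sigma(QS)$. Associativity of matrix multiplication then yields $P_\sigma(QS) = (P_\sigma Q)S = Q_\sigma S$, which is Eq.~\eqref{eq:selected-row}.

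As a cross-check that does not use $P_\sigma$, compare entries directly. For $1\le i\le p$ and $1\le j\le \ell$,
\[
\bigl((QS)_\sigma\bigr)_{ij} = (QS)_{\sigma_i j} = \sum_{t=1}^{k} Q_{\sigma_i t} S_{tj} = \sum_{t=1}^{k} (Q_\sigma)_{it} S_{tj} = (Q_\sigma S)_{ij}.
\]

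No step is a genuine obstacle. The only point needing care is the formal definition of $Q_\sigma$ for a list that may be unordered and may repeat indices. Stating it through the selection matrix $P_\sigma$, rather than as a subset of rows, makes the general case immediate.

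In the algorithm, the lemma is applied with $Q = Q_L\oplus Q_R$ and $S = S_v$. This justifies line~10 of Algorithm~\ref{alg:boundary-row-dc}: the requested rows of $Q_v$ can be formed from the requested child rows alone, without materializing $Q_v$.
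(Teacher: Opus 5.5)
Your proof is correct and matches the paper's. The paper also argues by the entrywise identity $((QS)_\sigma)_{ij}=\sum_{t}Q_{\sigma(i),t}S_{tj}=(Q_\sigma S)_{ij}$, which is your cross-check, and your selection-matrix form $Q_\sigma=P_\sigma Q$ with associativity is an equivalent repackaging that handles unordered and repeated indices explicitly.
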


\begin{proof}
For every requested row position $i$ and output column $j$,
\[
    ((QS)_\sigma)_{ij}
    =\sum_{t=1}^k Q_{\sigma(i),t}S_{tj}
    =\sum_{t=1}^k (Q_\sigma)_{it}S_{tj}
    =(Q_\sigma S)_{ij}.
\]
Thus the two matrices are equal entrywise.
\end{proof}

\begin{theorem}[BR computes the same eigenvalues as conventional D\&C in exact arithmetic]
\label{thm:br-correct}
Assume BR and conventional D\&C use the same split tree, deflation decisions, column ordering, column signs, and secular-equation convention. In exact arithmetic, every internal BR merge constructs the same secular problem as conventional D\&C, and the root eigenvalues returned by BR are identical to those returned by conventional D\&C.
\end{theorem}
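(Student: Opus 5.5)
The proof goes by strong induction on the merge tree, processing nodes bottom-up. The induction hypothesis at a node $v$ has two parts. First, BR returns the same spectrum $\Lambda_v$ as conventional D\&C. Second, for the requested row list $\sigma_v$, BR returns exactly $(Q_v)_{\sigma_v}$, where $Q_v$ is the eigenvector matrix conventional D\&C would construct at $v$ (with the same ordering and sign conventions). The hypothesis concerns the requested rows only, not the boundary pair alone, because Algorithm~\ref{alg:boundary-row-dc} lets ancestors request rows that are boundary rows of the ancestor but interior rows of $v$. The statement's assumption that both methods share the split tree, deflation decisions, column ordering, signs, and secular convention is used so that $\Lambda_v$, $S_v$ and $Q_v$ are well-defined objects common to both methods, not merely equal up to a permutation or sign.

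\emph{Base case.} Leaves use the same small tridiagonal eigensolver with the same conventions. Hence $\Lambda_v$ agrees, and the returned rows are rows of the same leaf matrix $Q_v$.

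\emph{Inductive step.} Fix an internal node with children $L$ and $R$, and assume the hypothesis for both.
\begin{enumerate}
\item The row-mapping step of Algorithm~\ref{alg:boundary-row-dc} places the last row of $T_L$ in $\sigma_L$ and the first row of $T_R$ in $\sigma_R$. By induction, BR therefore holds $\bhi(Q_L)$ and $\blo(Q_R)$ exactly.
\item By Lemma~\ref{lem:boundary-sufficiency}, the vector $z=\concat(\bhi(Q_L),\blo(Q_R))$ formed by BR coincides with the conventional secular vector. Together with $D=\Lambda_L\oplus\Lambda_R$ being equal by induction, and the same $\rho$ coming from the identical split, the secular problem $\diag(D)+\rho zz^T$ is identical. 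This proves the first claim of Theorem~\ref{thm:br-correct} at this node.
\item Deflation and the secular solve are deterministic functions of $(D,z,\rho)$ under the shared conventions. So BR obtains the same $\Lambda_v$ and the same local transform $S_v$, including the deflation permutations and Givens rotations folded into it. Conventional D\&C then sets $Q_v=(Q_L\oplus Q_R)S_v$.
\item By Lemma~\ref{lem:selected-row}, $(Q_v)_{\sigma_v}=(Q_L\oplus Q_R)_{\sigma_v}S_v$. Each requested parent row lies either in the left block, where the row of $Q_L\oplus Q_R$ is $[\,(Q_L)_i,\,0\,]$, or in the right block, where it is $[\,0,\,(Q_R)_j\,]$.
\item The mapping step also puts the corresponding child indices into $\sigma_L$ or $\sigma_R$. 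The induction hypothesis then supplies these rows exactly, which closes the induction.
\end{enumerate}

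\emph{Conclusion.} At the root, $\Lambda_{\text{root}}$ agrees with conventional D\&C, which gives the eigenvalue claim.

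\emph{Main obstacle.} The algebra above is routine; the delicate step is the bookkeeping in steps 1, 3 and 5. Step 5 needs a precise statement of the mapping: every requested parent row index is translated, with block offsets, into a child request, and the split-boundary rows are always added. The mapping has to be an invariant of Algorithm~\ref{alg:boundary-row-dc} for all duplicate and unordered requests, which Lemma~\ref{lem:selected-row} tolerates. Step 3 needs $S_v$ to depend only on $(D,z,\rho)$ and the shared conventions, never on entries of the children outside the requested rows. If deflation were decided using data BR does not hold, the argument would fail. I would therefore spell out that standard deflation tests inspect only $D$ and $z$, and that this is exactly what the statement's hypothesis of shared deflation decisions guarantees.
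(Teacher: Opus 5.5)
Your proposal is correct and follows the paper's own proof: induction over the merge tree, with Lemma~\ref{lem:boundary-sufficiency} giving the identical secular problem at each internal node, and Lemma~\ref{lem:selected-row} propagating the requested parent rows from the requested child rows. Your hypothesis on $(Q_v)_{\sigma_v}$ is simply a more explicit version of the paper's ``same requested child boundary rows''. One small correction to your motivation: with contiguous splits, any ancestor's boundary row that lies in $v$ is always the first or last row of $v$, because deflation permutes and rotates columns, never rows. So $\sigma_v$ only ever contains those two rows, and your more general hypothesis is harmless but not needed.
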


\begin{proof}
The proof is by induction over the merge tree. At leaves, BR solves the same leaf eigenproblem as conventional D\&C and returns the requested rows of the same local eigenvector matrix. For an internal node, the induction hypothesis gives the same child spectra and the same requested child boundary rows. By Lemma~\ref{lem:boundary-sufficiency}, BR constructs the same $z$ and therefore the same diagonal-plus-rank-one secular problem as conventional D\&C. The local eigenvalues are therefore the same. By Lemma~\ref{lem:selected-row}, the parent rows requested by ancestors can be computed from selected child rows without forming unrequested rows. Hence the induction hypothesis also holds for the parent boundary state. Applying the argument up to the root proves the claim.
\end{proof}

\subsubsection{Conditioned Error Propagation}
\label{sec:conditioned-error}

BR removes storage and arithmetic associated with unrequested rows, but it still relies on the same local numerical ingredients as D\&C, including secular equation solves, deflation handling, and products used to form selected rows. Our stability statement is therefore conditioned on standard local stability interfaces rather than a fresh proof of the full LAPACK-style secular solver.

Let $\widehat B_v$ be the computed boundary state at node $v$. If the two child boundary-row errors satisfy
\begin{equation}
    \norm{\widehat{\bhi}_L-\bhi_L}_2\le \eta_L,\qquad
    \norm{\widehat{\blo}_R-\blo_R}_2\le \eta_R,
\end{equation}
then the error in the constructed secular vector satisfies
\begin{equation}
    \norm{\widehat z-z}_2^2
    =
    \norm{\widehat{\bhi}_L-\bhi_L}_2^2+
    \norm{\widehat{\blo}_R-\blo_R}_2^2
    \le
    \eta_L^2+\eta_R^2.
    \label{eq:z-error}
\end{equation}
Thus BR does not amplify the child boundary error when forming $z$ beyond the norm of the concatenation. The remaining local error is the error of the secular solve and the selected-row update. Section~\ref{app:proof} gives the detailed conditioned proof, including compact delta reconstruction, pointwise-to-Frobenius bounds, selected-row rounding-error composition, and tree-level propagation.

\subsection{Complexity}

The relevant comparison is between values-only D\&C paths, not between BR and the public QR/QL routine \texttt{xSTERF}. The public \texttt{xSTEDC} values-only interface may choose the low-workspace QR/QL path, whereas LAPACK's internal values-only D\&C machinery, represented by \texttt{DLAED0(ICOMPQ=0)}, retains enough replay state for \texttt{DLAEDA} to reconstruct future merge vectors. BR targets this second point in the design space: it keeps the D\&C merge and secular-solve structure, but changes the vector-derived state that is carried across levels.

We use a pass-count model to make the constant-factor difference explicit. A length-\(K\) linear pass is counted as \(K\), and \(K\) such passes are counted as \(K^2\). For a merge with active rank \(K\), both BR and internal values-only D\&C solve the same \(K\) secular equations. We write this shared root-solving cost as \(c_{\mathrm{sec}}K^2\), where \(c_{\mathrm{sec}}\) absorbs the secular solver's implementation-dependent iteration count and reduction cost. Lazy replay additionally reconstructs the current merge vector from stored transformation state; we denote this cost by \(c_{\mathrm{rep}}K^2\). It also reconstructs updated secular quantities and materializes dense local \(K\times K\) vector state for later replay. Thus a non-root internal values-only D\&C merge has the leading form
\[
    T_{\mathrm{lazy}}(K)
    =
    (c_{\mathrm{sec}}+c_{\mathrm{rep}}+4)K^2+O(K),
    \label{eq:lazy-merge-cost}
\]
where the constant term represents the remaining quadratic passes for updated-vector reconstruction and dense local secular-vector materialization.

BR removes the replay term and avoids writing a dense local secular-vector block, but it does not reduce the asymptotic cost of the secular root solves. At a non-root merge, BR streams each secular vector column through at most two selected boundary rows. In the same pass-count model,
\[
    T_{\mathrm{BR,nonroot}}(K)
    =
    (c_{\mathrm{sec}}+4)K^2+O(K).
    \label{eq:br-nonroot-cost}
\]
At the root, no parent consumes boundary rows, so BR skips selected-row propagation and computes only the final secular roots:
\[
    T_{\mathrm{BR,root}}(K)
    =
    c_{\mathrm{sec}}K^2+O(K).
    \label{eq:br-root-cost}
\]

For a balanced D\&C tree with constant leaf size,
\begin{align}
    \sum_{\text{internal nodes}} K^2 &= 2n^2+O(n), \nonumber\\
    \sum_{\text{non-root internal nodes}} K^2 &= n^2+O(n).
    \label{eq:tree-square-sum}
\end{align}
Substituting Eqs.~\eqref{eq:lazy-merge-cost}--\eqref{eq:br-root-cost} gives the leading merge-path costs
\[
    T_{\mathrm{lazy}}(n)
    =
    (2c_{\mathrm{sec}}+2c_{\mathrm{rep}}+8)n^2+O(n\log n),
\]
and
\[
    T_{\mathrm{BR}}(n)
    =
    (2c_{\mathrm{sec}}+4)n^2+O(n\log n).
\]
Thus the saved leading work is
\[
    T_{\mathrm{lazy}}(n)-T_{\mathrm{BR}}(n)
    =
    (2c_{\mathrm{rep}}+4)n^2+O(n\log n).
\]
This is a constant-factor reduction in the D\&C merge path, not a reduction of eigenvalue computation below the \(O(n^2)\) secular root-solving work.

The asymptotic gain is in storage. Conventional lazy-replay D\&C stores dense local secular-vector blocks and replay metadata, yielding quadratic real workspace and \(O(n\log n)\) integer metadata in LAPACK's internal values-only D\&C path. BR stores pole arrays, compact secular data, local metadata, and only the selected boundary rows required by ancestors. The resulting auxiliary state is \(O(n)\), excluding the input tridiagonal arrays and output eigenvalues. The implementation section uses this result as a contract: preserve the standard D\&C merge semantics, but make boundary rows and compact secular data the only persistent eigenvector-derived state.

\section{Implementation}

The GPU and CPU code paths implement the same algorithmic modification to tridiagonal D\&C. Both preserve the standard split tree, deflation, sorting, secular solves, and denominator reconstruction, but replace persistent full-eigenvector or replay state with the boundary-row state from Section~\ref{sec:boundary-row-state}. They differ only in platform realization: the GPU path maps the same state contract to resident CUDA kernels, while the CPU path exposes it through a LAPACK-style prototype for controlled comparison with existing routines.

\subsection{GPU Implementation}

The GPU values-only path is implemented as a resident tridiagonal D\&C solver. It is used only when eigenvectors are not requested; full-vector tridiagonal D\&C, dense back transformation, and other EVD stages use separate paths. The solver builds the split tree on the GPU, solves leaf subproblems, extracts only the first and last eigenvector rows from each leaf, and then processes the D\&C tree bottom-up. Merge tasks at the same level are independent, so each level is handled by batched CUDA kernels over all active merges. The persistent eigenvector-derived state is the boundary-row state, while split metadata, active-column maps, compact secular-root data, and temporary reduction buffers are stored in GPU workspace.

At each non-root merge, the GPU path constructs the rank-one update vector from child boundary rows,
\[
    z = [\bhi(Q_L), \blo(Q_R)].
\]
The implementation then performs the standard D\&C merge steps on GPU, which include sorting child eigenvalues, detecting negligible \(z_i\), detecting close poles, applying the corresponding Givens rotations, compacting the active secular problem, and recording active/deflated column mappings. The values-only distinction is the scope of these updates. Permutations and rotations that would normally be applied to full eigenvector blocks are applied only to the selected boundary rows and the associated metadata. This keeps deflation consistent with the standard algorithm while avoiding full-vector state.

After deflation, the active secular problem has rank \(K\). The root solve evaluates the eigenvalues of \(D+\rho zz^T\) with CUDA kernels that parallelize both across roots and across the pole reductions inside each root. For large \(K\), block-reduction kernels evaluate sums such as
\[
    \sum_i \frac{z_i^2}{d_i-\lambda}
\]
and the associated rescaling quantities in parallel within a CUDA block. This is important near the upper levels of the merge tree, where real dense-derived tridiagonal inputs often have limited deflation and therefore large active rank.

For non-root merges, the parent boundary rows are produced by streaming secular eigenvector columns through the selected rows. For an active root \(\lambda_j\), the secular vector has entries
\[
    y_j(i) =
    \frac{z_i/(d_i-\lambda_j)}
         {\left\|z/(d-\lambda_j)\right\|_2}.
\]
Instead of materializing the dense \(K\times K\) secular eigenvector block \(Y\), the kernel directly computes
\[
    R_{\mathrm{parent}}(:,j)
    =
    R_{\mathrm{child}}\, y_j ,
\]
where \(R_{\mathrm{child}}\) contains at most two selected rows. Thus each column update is reduced to two streamed dot products. At the root, no parent will consume boundary rows, so the implementation switches to a root-only mode that computes the final secular roots and skips boundary-row propagation entirely.

The GPU implementation also avoids a dense \(K\times K\) denominator matrix. Each secular root is stored in a compact representation consisting of an origin pole, an offset \(\tau\), and the near-pole denominator entries most sensitive to cancellation. Conceptually,
\[
    \lambda_j = d_{\mathrm{origin}}+\tau_j,\qquad
    \delta_i = d_i-d_{\mathrm{origin}}-\tau_j .
\]
During selected-row propagation, the kernel reconstructs each secular vector column from this compact state and uses the cached near-pole denominators where direct subtraction would lose relative accuracy. This preserves the stable denominator representation needed for secular-vector reconstruction while keeping the temporary state linear in the active merge size.

Overall, the GPU realization removes the persistent dense eigenvector matrix, dense secular eigenvector blocks, and replayable transformation history from the tridiagonal D\&C stage. The algorithmic state remains the standard D\&C secular problems and deflation metadata, but the vector-derived state is reduced to boundary rows. The CPU realization below follows the same state discipline in a LAPACK-compatible setting, which gives a controlled experimental vehicle for isolating the tridiagonal-stage behavior evaluated in Section~\ref{sec:experiments}.

\subsection{CPU Implementation}

The CPU prototype implements the same boundary-row algorithm as a private LAPACK-style path, \texttt{DSTEDC\_DC\_BR}. It deliberately does not change the public \texttt{DSTEDC('N')} behavior, which continues to use the standard low-memory QR/QL path. The private driver preserves the usual outer structure of a tridiagonal eigensolver: it validates arguments, splits at negligible off-diagonal entries, scales unreduced blocks, calls \texttt{DSTERF} for small blocks, and routes larger blocks to \texttt{DLAED0\_BR}. This gives a controlled LAPACK-compatible baseline for evaluating boundary-row D\&C without changing public routine semantics.

\texttt{DLAED0\_BR} keeps the standard D\&C tree and stores two arrays, \texttt{BLO} and \texttt{BHI}, for the first and last rows of each active local eigenvector block. Leaf blocks are solved with \texttt{DSTEQR('I')}, but only these two rows are copied into persistent state. \texttt{DLAED7\_BR} performs a single merge, \texttt{DLAED8\_BR} mirrors LAPACK's sorting, deflation, permutation, and close-pole rotation logic on selected rows, and \texttt{DLAED9\_BR} streams secular columns through those rows. \texttt{DLAED4\_BR} provides the same compact denominator representation used by the GPU path, so the CPU implementation avoids storing full \(K\times K\) denominator columns while retaining near-pole entries.

The CPU path exposes OpenMP parallelism without adding thread-dependent workspace. \texttt{DLAED0\_BR} parallelizes independent merge pairs at the same tree level; each merge receives a scratch slice based on its physical interval, so same-level merges use disjoint workspace. For the root-only merge, independent secular root solves are parallelized using existing \(O(K)\) scratch slices as per-thread denominator buffers. The large-block workspace query is \(16N\) double-precision entries and \(7N\) integer entries. In comparison, LAPACK's internal values-only D\&C path \texttt{DLAED0(ICOMPQ=0)} requires \(1+3N+2N\lceil\log_2N\rceil+3N^2\) real workspace and \(6+6N+5N\lceil\log_2N\rceil\) integer workspace. Thus the CPU boundary-row path reduces the leading real workspace term from \(O(N^2)\) to \(O(N)\) and removes the \(O(N\log N)\) integer replay metadata.

The CPU implementation therefore closes the loop from the method to measurement: it keeps the same D\&C merge semantics as the internal values-only D\&C baseline, exposes OpenMP parallelism without thread-dependent persistent storage, and reports an explicit linear workspace query that can be checked directly in experiments.

\section{Experiments}
\label{sec:experiments}

The preceding sections make three claims about BR: it should preserve the D\&C merge semantics, replace persistent values-only D\&C replay state with linear-size boundary data, and expose more parallelism than the low-memory QR/QL path when the spectrum permits useful deflation. This section tests those claims as one evidence chain. We first measure workspace and runnable scale, then compare against CPU QR/QL and internal values-only D\&C baselines, and finally test the same values-only D\&C design point against cuSOLVER on an H100. Dense reduction and back transformation are outside the measured kernel because BR changes the tridiagonal D\&C stage rather than the surrounding dense EVD pipeline.

\subsection{Experimental Setup}

All CPU experiments were run on one full compute node with two Intel Xeon Gold 6348 sockets, 28 physical cores per socket, and 56 cores in total.  The jobs requested 56 CPU slots and 64~GB of memory from the scheduler.  The code was compiled with \texttt{gfortran -O2 -fopenmp -std=legacy} and linked against Intel oneAPI MKL 2025.3 through \texttt{mkl\_rt}.  We set \texttt{OMP\_PROC\_BIND=close}, \texttt{OMP\_PLACES=cores}, disable MKL dynamic threading, and use the LP64 GNU-threaded MKL runtime.  For BR-only timing runs, \texttt{MKL\_NUM\_THREADS=1} so that the measured parallelism comes from the private BR OpenMP implementation rather than from nested MKL calls.

GPU experiments were run on one NVIDIA H100 PCIe GPU with 80~GB of device memory and CUDA 13.2.  The scheduler job requested one \texttt{h100\_pcie} GPU, eight CPU cores, and 80~GB of host memory.  We use our GPU EVD prototype and build the tridiagonal D\&C benchmark for SM90.  The GPU benchmark runs the values-only tridiagonal D\&C stage with \texttt{--backend=2}, \texttt{--vectors=0}, \texttt{--run-self=1}, and \texttt{--run-cusolver=1}; it therefore measures the same D\&C stage targeted by BR rather than a full dense EVD pipeline.  The synthetic and structured GPU rows use two warmup iterations and five timed iterations.

The CPU experiments use two fixed-seed pseudo-random tridiagonal families and two deterministic structured stress cases.  The \emph{uniform} family uses \(d_i\sim U[-1,1]\) and \(e_i\sim U[0.10,0.30]\).  The \emph{normal} family uses \(d_i\sim\mathcal{N}(0,1)\) and the same positive uniform off-diagonal distribution.  Both pseudo-random families use a fixed xorshift seed determined by the distribution and \(N\), so every matrix is exactly reproducible.  The \emph{Toeplitz} family uses constant entries \(d_i=2\) and \(e_i=0.25\).  The \emph{clustered} family uses nearly equal diagonal entries \(d_i=1+10^{-12}(i-(n+1)/2)\) and small off-diagonal entries \(e_i=10^{-4}(1+0.1\cos(0.33i))\).  For the reduced-dense case, the H100 first reduces a dense symmetric input to tridiagonal form and writes only the compact \(D/E\) arrays; the CPU then reads that cache and measures only the tridiagonal eigensolver stage.

Timing reports the best elapsed time over repeated runs for small sizes and a single run for large sizes, matching the benchmark scripts.  Accuracy is measured against \texttt{DSTERF} whenever a \texttt{DSTERF} reference was computed; for sizes beyond the reference range we report successful completion and checksums rather than claiming a reference error.  We use normalized accuracy metrics,
\begin{align}
e_{\mathrm{fwd}} &=
    \frac{\|\widehat\lambda-\lambda_{\mathrm{ref}}\|_\infty}
         {\max(1,\|\lambda_{\mathrm{ref}}\|_\infty)}, \\
e_{\mathrm{bwd}} &=
    \frac{\|\widehat\lambda-\lambda_{\mathrm{ref}}\|_\infty}
         {\max(1,\|T\|_\infty)}.
\end{align}
The second quantity is a backward-error certificate scaled by the input tridiagonal norm.  Raw absolute eigenvalue differences are omitted because they depend directly on the spectrum scale.  All memory numbers use the workspace query returned by the BR routine, which is \(16N\) double entries and \(7N\) integer entries for large blocks.

\subsection{Baselines}

We compare BR with two CPU baselines.  The first baseline is LAPACK/MKL \texttt{DSTERF}, the standard eigenvalue-only QR/QL routine.  It has very small storage requirements, essentially the diagonal and off-diagonal arrays, but exposes limited parallelism.  The second baseline is the internal values-only D\&C path, represented by the LAPACK-style \texttt{DLAED0(ICOMPQ=0)} machinery in MKL.  This path is the closest algorithmic comparison for BR because it preserves D\&C secular merges and returns eigenvalues only, but it carries replay state for merge-vector reconstruction and therefore has much larger workspace.  We use the term \emph{internal values-only D\&C} consistently for this baseline, rather than treating it as a public solver interface.  The standard LP64 measurements cover this path through \(N=16{,}384\); selected ILP64 measurements extend the comparison to larger rows whose queried \texttt{LWORK} exceeds the LP64 integer range.  On the GPU we compare against cuSOLVER \texttt{cusolverDnXstedc} with \texttt{compz=N}, the public cuSOLVER tridiagonal D\&C interface for eigenvalues only.

\subsection{Workspace and Scale}

This workspace result is a design-space change, not a claim that BR uses less memory than QR/QL.  \texttt{DSTERF} is the lowest-memory eigenvalue-only baseline: it stores only the tridiagonal arrays and therefore remains the right reference point when memory is the only objective.  BR deliberately spends a larger, but still linear, workspace budget to recover the D\&C merge parallelism without carrying dense eigenvector replay state.  In production-style storage, the tridiagonal input plus BR workspace is about \(172N\) bytes: \(D/E\), \texttt{WORK(16N)} in double precision, and \texttt{IWORK(7N)} in 32-bit integers.  This is about \(10.75\times\) the QR/QL input storage, but it is still \(O(N)\); the internal values-only D\&C baseline, in contrast, has a quadratic real-workspace term.

Table~\ref{tab:workspace-design} summarizes this trade-off at a shared reference size, \(N=65{,}536\), on the fixed-seed uniform family.  At this size, QR/QL needs only about 1.00~MiB for \(D/E\), but takes 56.52~s.  BR uses 9.75~MiB of queried workspace and takes 10.56~ms on 56 threads.  The internal values-only D\&C formula would require about 96.0~GiB of real workspace at the same \(N\), so we report it as OOM in the standard LP64 experiment.  Table~\ref{tab:internal-dc-vs-br} separately includes ILP64 internal-D\&C rows at \(N=32{,}768\) and on the reduced-dense input to show the cost once the interface limit is removed.  The scale limit for BR in the current LP64 build is not node memory but the 32-bit \texttt{LWORK} count: BR successfully ran \(N=134{,}000{,}000\), within 0.2\% of the bound \(\lfloor(2^{31}-1)/16\rfloor\).

\begin{table}[t]
\caption{Workspace design points at \(N=65{,}536\) on the fixed-seed uniform family.  QR/QL is the lowest-memory baseline; BR uses more linear workspace to expose D\&C parallelism, while internal values-only D\&C is quadratic.}
\label{tab:workspace-design}
\centering
\scriptsize
\setlength{\tabcolsep}{2.8pt}
\begin{tabular}{llrr}
\toprule
Path & Space & Mem. at \(65{,}536\) & Time at \(65{,}536\) \\
\midrule
\texttt{DSTERF} / QR-QL & \(O(N)\) input only & 1.00~MiB & 56.52~s \\
BR, 56T & \(O(N)\) D\&C state & 9.75~MiB & 10.56~ms \\
Internal values-only D\&C & \(O(N^2)\) replay state & 96.0~GiB & OOM \\
\bottomrule
\end{tabular}
\end{table}

\subsection{Performance Against QR/QL}

BR is much faster than \texttt{DSTERF} on the fixed-seed pseudo-random families, and it still gives useful threaded speedups on the harder structured spectra and on a dense-derived tridiagonal input.  Table~\ref{tab:cpu-dsterf-vs-br} reports these CPU cases together so that the reader can compare the same baseline, metric, and thread count across input families.  At \(N=65{,}536\), 56-thread BR is \(5351\times\) faster than \texttt{DSTERF} on uniform inputs and \(4378\times\) faster on normal inputs.  On Toeplitz and clustered inputs, single-thread BR is not the right comparison point because both algorithms are close to quadratic; the threaded BR path is the useful one, giving about \(4.0\times\) on Toeplitz and \(6.7\times\) on the largest updated clustered row.

\begin{table}[t]
\caption{CPU comparison against \texttt{DSTERF}.  Larger ratios mean BR is faster.  The reduced-dense row uses \(D/E\) arrays produced once by GPU dense-to-tridiagonal reduction; timings exclude that shared reduction and measure only the tridiagonal eigensolver.}
\label{tab:cpu-dsterf-vs-br}
\centering
\scriptsize
\setlength{\tabcolsep}{2.5pt}
\begin{tabular}{llrrr}
\toprule
Input & \(N\) & \texttt{DSTERF} & BR 1T & BR 56T \\
\midrule
\multicolumn{5}{l}{\emph{Fixed-seed pseudo-random}} \\
Uniform & 16{,}384 & 4.17~s & \(79.2\times\) & \(1330.8\times\) \\
Uniform & 65{,}536 & 56.52~s & \(262.2\times\) & \(5351.1\times\) \\
Normal & 16{,}384 & 3.48~s & \(80.8\times\) & \(1331.2\times\) \\
Normal & 65{,}536 & 48.85~s & \(274.9\times\) & \(4378.1\times\) \\
\midrule
\multicolumn{5}{l}{\emph{Structured stress cases}} \\
Toeplitz & 16{,}384 & 3.40~s & \(1.58\times\) & \(3.99\times\) \\
Toeplitz & 65{,}536 & 51.94~s & \(1.58\times\) & \(4.03\times\) \\
Clustered & 16{,}384 & 3.27~s & \(0.84\times\) & \(3.00\times\) \\
Clustered & 32{,}768 & 12.76~s & \(0.83\times\) & \(6.75\times\) \\
\midrule
\multicolumn{5}{l}{\emph{Reduced dense}} \\
Reduced dense & 49{,}152 & 29.85~s & \(0.95\times\) & \(8.06\times\) \\
\bottomrule
\end{tabular}
\end{table}

The reduced-dense row connects this tridiagonal study back to a dense EVD pipeline without charging either CPU solver for dense reduction.  A GPU job first reduces a \(49152\times49152\) dense symmetric matrix to tridiagonal form, stores the compact \(D/E\) arrays, and discards the dense intermediate.  The CPU then solves exactly those cached arrays: single-thread BR is slightly slower than \texttt{DSTERF}, but the parallel BR path recovers an \(8.06\times\) speedup at 56 threads with the current root-solve scheduling.  This result supports the intended hybrid usage: dense reduction can run on the GPU, while the compact tridiagonal stage remains transferable and can still benefit from BR on CPU.

The speedup comes from a change in empirical scaling on the pseudo-random families.  For \(N\ge65{,}536\), uniform BR fits \(N^{1.040}\) on one thread and \(N^{0.973}\) on 56 threads, while normal BR fits \(N^{1.033}\) and \(N^{0.950}\), respectively.  Over the referenced range \(N=4096\) to \(65{,}536\), \texttt{DSTERF} fits approximately \(N^{1.916}\) on uniform inputs and \(N^{1.907}\) on normal inputs.  This should be read as an empirical property of the measured spectra, not as an asymptotic proof that BR makes all tridiagonal eigenvalue problems linear.

\subsection{Comparison with Internal D\&C}

The closest CPU D\&C baseline is MKL's internal values-only path that calls \texttt{DLAED0(ICOMPQ=0)}.  This path has the same eigenvalue-only output as BR and uses the same secular-merge family, but it keeps the replay state needed to reconstruct merge vectors.  Table~\ref{tab:internal-dc-vs-br} reports this comparison on the fixed-seed pseudo-random families, where BR has near-linear empirical scaling in Table~\ref{tab:cpu-dsterf-vs-br}.  With one thread, BR is \(3.1\)--\(3.8\times\) faster than the internal values-only D\&C path through \(N=16{,}384\), while using 2.44~MiB rather than 6.0~GiB of workspace.  At \(N=32{,}768\), ILP64 internal-D\&C runs require about 24.0~GiB, whereas BR still uses 4.88~MiB.

The multithreaded rows show that the remaining CPU gap was an implementation bottleneck rather than a necessary cost of the BR formulation.  The leaf eigenproblems are independent, but an earlier prototype initialized them serially with \texttt{DSTEQR('I')}; parallelizing those leaf solves removes the dominant cost on pseudo-random inputs where later merge levels deflate strongly.  At \(N=16{,}384\), BR is \(4.38\times\) faster than internal values-only D\&C on uniform inputs and \(3.43\times\) faster on normal inputs, while preserving the same linear workspace advantage.  The structured and reduced-dense cases show that this advantage is not limited to pseudo-random spectra: on 56 threads, BR is \(3.79\times\) faster on the ILP64 Toeplitz row and \(6.13\times\) faster on the ILP64 clustered row at \(N=32{,}768\), and \(5.63\times\) faster on the dense-derived tridiagonal input.  The larger internal-D\&C rows use ILP64 calls because their queried \texttt{LWORK} exceeds the LP64 integer range; their eigenvalues are checked against \texttt{DSTERF} with the same normalized forward- and backward-error metrics used for BR.  Thus the internal-D\&C evidence supports BR as both a state-reduction result and a practical faster values-only D\&C path on the measured runnable cases.

\begin{table}[t]
\caption{BR versus MKL internal values-only D\&C.  The internal path calls \texttt{DLAED0(ICOMPQ=0)} and is output-equivalent to BR.  Ratios are internal/BR, so values above one mean BR is faster.  Rows at \(N=32{,}768\) and the reduced-dense row use ILP64 internal-D\&C calls when the LP64 \texttt{LWORK} count is insufficient.}
\label{tab:internal-dc-vs-br}
\centering
\scriptsize
\setlength{\tabcolsep}{2.0pt}
\begin{tabular}{llrrrrr}
\toprule
Mode & \(N\) & Int. WS & BR WS & Int. Time & BR Time & Int./BR \\
\midrule
\multicolumn{7}{l}{\emph{1-thread fixed-seed pseudo-random inputs}} \\
Uniform & 4{,}096 & 386~MiB & 0.61~MiB & 41.4~ms & 13.2~ms & \(3.15\times\) \\
Uniform & 8{,}192 & 1.50~GiB & 1.22~MiB & 83.8~ms & 27.0~ms & \(3.10\times\) \\
Uniform & 16{,}384 & 6.01~GiB & 2.44~MiB & 173.9~ms & 54.4~ms & \(3.20\times\) \\
Uniform & 32{,}768 & 24.0~GiB & 4.88~MiB & 369.8~ms & 105.9~ms & \(3.49\times\) \\
\midrule
Normal & 4{,}096 & 386~MiB & 0.61~MiB & 40.0~ms & 10.5~ms & \(3.81\times\) \\
Normal & 8{,}192 & 1.50~GiB & 1.22~MiB & 80.5~ms & 21.3~ms & \(3.77\times\) \\
Normal & 16{,}384 & 6.01~GiB & 2.44~MiB & 166.3~ms & 43.4~ms & \(3.84\times\) \\
Normal & 32{,}768 & 24.0~GiB & 4.88~MiB & 332.0~ms & 89.2~ms & \(3.72\times\) \\
\midrule
\multicolumn{7}{l}{\emph{56-thread ratio on runnable internal-D\&C rows}} \\
Uniform & 16{,}384 & 6.01~GiB & 2.44~MiB & 15.6~ms & 3.57~ms & \(4.38\times\) \\
Normal & 16{,}384 & 6.01~GiB & 2.44~MiB & 11.0~ms & 3.20~ms & \(3.43\times\) \\
Toeplitz & 16{,}384 & 6.01~GiB & 2.44~MiB & 3.04~s & 812~ms & \(3.75\times\) \\
Clustered & 16{,}384 & 6.01~GiB & 2.44~MiB & 5.80~s & 972~ms & \(5.97\times\) \\
Toeplitz & 32{,}768 & 24.0~GiB & 4.88~MiB & 11.90~s & 3.14~s & \(3.79\times\) \\
Clustered & 32{,}768 & 24.0~GiB & 4.88~MiB & 23.27~s & 3.80~s & \(6.13\times\) \\
Reduced dense & 49{,}152 & 54.0~GiB & 7.31~MiB & 44.8~s & 7.95~s & \(5.63\times\) \\
\bottomrule
\end{tabular}
\end{table}

\subsection{GPU D\&C Against cuSOLVER}

The H100 experiment tests whether the same values-only state reduction is useful in a GPU D\&C setting.  Table~\ref{tab:h100-cusolver-dc} compares our GPU EVD prototype's values-only D\&C path with cuSOLVER \texttt{Xstedc} on the same tridiagonal inputs.  The comparison is intentionally DC-only: both paths start from \(D/E\), compute eigenvalues only, and exclude dense reduction and back transformation.  For the reduced-dense row, a dense symmetric input is first reduced to tridiagonal form; the shared preparation time is excluded, so the row still measures only the D\&C solver stage.

The results show two effects.  First, cuSOLVER's queried device workspace grows quadratically and reaches 54.0~GiB at \(N=49152\), while our GPU EVD prototype uses a compact \(D/E\) cache and a 4096-column temporary slab, about 1.5~GiB for the largest case.  This is the GPU form of the state-management issue addressed by BR: even before full dense EVD is considered, the public D\&C path carries a large workspace.

Second, after removing a serial split-metadata update from the GPU path, our GPU EVD prototype is faster than cuSOLVER on every measured \(N=49152\) row in Table~\ref{tab:h100-cusolver-dc}.  On the fixed-seed pseudo-random inputs, the speedups are \(1.86\times\) for normal and \(1.74\times\) for uniform.  The dense-derived and structured rows show larger gains: \(2.60\times\) on the reduced-dense tridiagonal input, \(5.73\times\) on Toeplitz, and \(3.57\times\) on clustered inputs.  Thus the GPU data supports the same interpretation as the CPU data.  BR makes a much lower-memory D\&C design point available, and the speedups are largest when conventional D\&C moves substantially more state than the boundary-row path needs.

Profiling explains why the pseudo-random speedups are smaller than the structured ones.  Before the split-metadata fix, a serial split-update kernel consumed about 72~ms at \(N=49152\) on normal and uniform inputs, larger than the secular root-solving time.  Parallelizing this metadata update reduces the split-update cost to microsecond-scale overhead.  The remaining dominant kernels are boundary-vector construction, about 42~ms, and local eigenvalue sorting, about 15~ms.  These costs are mostly per-level D\&C bookkeeping rather than secular arithmetic, so they are most visible on pseudo-random spectra where deflation makes the root solves relatively cheap.

\begin{table}[t]
\caption{H100 values-only tridiagonal D\&C comparison with cuSOLVER \texttt{Xstedc}.  Larger cuSOLVER/ours ratios mean our GPU EVD prototype is faster.}
\label{tab:h100-cusolver-dc}
\centering
\scriptsize
\setlength{\tabcolsep}{2.5pt}
\begin{tabular}{lrrrr}
\toprule
Mode & \(N\) & Our GPU EVD & cuSOLVER & cuSOLVER/ours \\
\midrule
Normal & 49{,}152 & 78.73~ms & 146.71~ms & \(1.86\times\) \\
Uniform & 49{,}152 & 85.75~ms & 149.09~ms & \(1.74\times\) \\
Reduced dense & 49{,}152 & 683.18~ms & 1.78~s & \(2.60\times\) \\
Toeplitz & 49{,}152 & 480.81~ms & 2.75~s & \(5.73\times\) \\
Clustered & 49{,}152 & 697.22~ms & 2.49~s & \(3.57\times\) \\
\bottomrule
\end{tabular}
\end{table}

\subsection{Effect of Spectrum Structure}

The spectrum strongly affects the observed speedup, which is why the paper reports pseudo-random and structured cases together.  The uniform and normal families have substantial deflation and produce near-linear BR timings in the measured large-\(N\) range.  Toeplitz and clustered families are harder: for \(N\ge 4096\), both \texttt{DSTERF} and BR fit close to quadratic growth.  The Toeplitz fits are \(\texttt{DSTERF}\propto N^{1.972}\), BR 1T \(\propto N^{1.975}\), and BR 16T \(\propto N^{1.930}\) over the available 16-thread range.  The clustered fits are similarly quadratic, with \(\texttt{DSTERF}\propto N^{1.953}\), BR 1T \(\propto N^{1.979}\), and BR 16T \(\propto N^{1.941}\).  In these harder cases, BR should be described as a lower-memory D\&C formulation with useful constant-factor and threading benefits, not as a universally growing-speedup replacement for QR/QL.

This caveat is important for interpreting the main result.  BR removes dense eigenvector-derived state from values-only D\&C and exposes more parallelism than \texttt{DSTERF}; it does not remove the secular root-solving work that remains when deflation is weak.  The measurements therefore support a conditional performance claim: BR is especially effective for spectra where D\&C deflation and selected-row propagation keep the merge work close to linear, while its most general guarantee is linear auxiliary storage.

\subsection{Numerical Accuracy}

For all benchmark rows with a \texttt{DSTERF} reference, BR produced eigenvalues close to the reference under the normalized metrics defined above.  The fixed-seed pseudo-random cross-checks cover uniform and normal diagonal distributions through \(N=65{,}536\), while the structured checks cover Toeplitz, clustered, and reduced-dense tridiagonal inputs.  We report forward and backward errors rather than raw eigenvalue differences so that the reduced-dense case, whose spectrum has \(\|\lambda\|_\infty=4.93\times10^4\), is compared on the same scale as the synthetic cases.  These normalized checks are consistent with the conditioned-error argument in Section~\ref{sec:conditioned-error}: BR changes which rows are propagated, but it uses the same local secular-equation ingredients as D\&C and constructs the same merge vectors in exact arithmetic.

\section{Conclusion}

This paper presents boundary-row D\&C, a values-only tridiagonal eigensolver that keeps the parallel D\&C merge structure while removing the dense eigenvector-derived state that makes conventional values-only D\&C memory-heavy. The central idea is to propagate only the boundary rows needed by future rank-one merges. This state is sufficient to construct the same secular problems as conventional D\&C in exact arithmetic, reduces persistent auxiliary storage from quadratic to linear, and maps naturally to both CPU and GPU implementations.

The prototypes show the practical effect of this reduced state. On CPUs, BR reaches problem sizes near the LP64 workspace limit and reduces the internal values-only D\&C state from quadratic replay storage to linear boundary-row storage. This is not a claim that BR uses less memory than QR/QL, which remains the minimal-storage eigenvalue-only baseline; rather, BR keeps D\&C-style parallelism while avoiding the quadratic state of conventional values-only D\&C. With parallel leaf initialization, the CPU BR prototype is faster than both \texttt{DSTERF} and MKL's internal D\&C path on the completed pseudo-random and structured rows. ILP64 internal-D\&C runs confirm that larger cases can be executed when the interface and memory budget allow it, but the quadratic replay state remains expensive: \(N=32{,}768\) already requires about 24.0~GiB, and BR remains faster on the completed ILP64 rows. On an H100, the values-only D\&C path avoids the large quadratic cuSOLVER workspace and is faster than cuSOLVER on all measured \(N=49152\) tridiagonal-stage cases, with \(1.74\)--\(5.73\times\) speedups on the synthetic and structured families after parallelizing split-metadata updates.

These results make the boundary clear. BR is not a universal subquadratic eigensolver; it makes the D\&C design point available to eigenvalue-only workloads without carrying full eigenvector state.

\bibliographystyle{ACM-Reference-Format}
\balance
\bibliography{ref}

\appendix
\section{Proof Details}
\label{app:proof}

This appendix records the BR-specific algebra and conditioned error-propagation details used by Section~\ref{sec:conditioned-error}. It does not reprove the stability of a complete secular-equation solver, nor does it formalize all IEEE~754 behavior. Those components are treated as external numerical-analysis assumptions.

\subsection{Conventions}

For a local eigenvector block $Q\in\mathbb{R}^{m\times n}$, define
\[
    \blo(Q)=Q_{1,:},\qquad
    \bhi(Q)=Q_{m,:}.
\]
The boundary state is $B(Q)=(\blo(Q),\bhi(Q))$. A secular problem is written as
\[
    P=(D,z,\rho),\qquad
    A(P)=\diag(D)+\rho zz^T,
\]
where $D$ is the pole vector and $\rho$ carries the rank-one strength. For perturbation bookkeeping we use
\[
    \norm{\widehat P-P}_P
    =
    \left(
    \norm{\widehat D-D}_2^2+
    \norm{\widehat z-z}_2^2+
    |\widehat\rho-\rho|^2
    \right)^{1/2}.
\]
This unweighted norm can be replaced by a weighted norm; the corresponding perturbation constants then absorb the scaling of $D$, $z$, and $\rho$.

\subsection{Row Selection and Metadata}

\begin{lemma}[Column permutations commute with row selection]
Let $\Pi$ be a column permutation matrix and let $\sigma$ be an arbitrary row list. Then
\[
    (Q\Pi)_\sigma=Q_\sigma\Pi.
\]
\end{lemma}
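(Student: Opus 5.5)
The plan is to reduce the claim to the already established Lemma~\ref{lem:selected-row}, or, even more directly, to an entrywise computation. The main point is that a column permutation matrix $\Pi$ acts only on the column index. Row selection by $\sigma$ acts only on the row index. Two operations on independent indices should commute.

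First I will fix notation. Let $Q\in\mathbb{R}^{m\times k}$ and let $\Pi\in\mathbb{R}^{k\times k}$ be a permutation matrix. Then there is a bijection $\pi$ of $\{1,\dots,k\}$ with $\Pi_{tj}=1$ exactly when $t=\pi(j)$, and $\Pi_{tj}=0$ otherwise. Let $\sigma=(\sigma(1),\dots,\sigma(r))$ be the row list, which may be unordered and may contain repetitions. We define $(M)_\sigma$ as the $r\times(\cdot)$ matrix whose $i$th row is row $\sigma(i)$ of $M$.

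Second, I will observe that $\Pi$ is a particular instance of the matrix $S$ in Lemma~\ref{lem:selected-row}, with $\ell=k$. Applying that lemma with $S=\Pi$ gives $(Q\Pi)_\sigma=Q_\sigma\Pi$ immediately. No property of $\Pi$ beyond its being a $k\times k$ real matrix is needed. The lemma already allows $\sigma$ to be unordered and repeated, so duplicates in the request list require no separate argument.

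For a self-contained version, I will also record the entrywise identity
\[
((Q\Pi)_\sigma)_{ij}=(Q\Pi)_{\sigma(i),j}=Q_{\sigma(i),\pi(j)}=(Q_\sigma)_{i,\pi(j)}=(Q_\sigma\Pi)_{ij}.
\]
This shows concretely that the implementation can apply the same column-index metadata $\pi$ to the stored boundary rows that conventional D\&C would apply to full blocks.

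There is no real obstacle here. The only point needing care is the convention for $\Pi$, that is, whether $\pi$ or $\pi^{-1}$ appears in the entry formula. The reduction to Lemma~\ref{lem:selected-row} sidesteps this issue entirely. The same reasoning extends verbatim to Givens rotations used in close-pole deflation, since those are also right multiplications by a $k\times k$ matrix.
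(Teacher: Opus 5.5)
Your proposal is correct and matches the paper's proof: the paper argues entrywise that, for every requested row $i$ and column $j$, both sides equal $Q_{\sigma(i),\pi(j)}$, which is exactly your displayed identity. Your one-line reduction to Lemma~\ref{lem:selected-row} with $S=\Pi$ is also valid; it is essentially how the paper handles the companion Givens-rotation lemma.
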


\begin{proof}
For every requested row $i$ and column $j$, both sides equal $Q_{\sigma(i),\pi(j)}$.
\end{proof}

\begin{lemma}[Local rotations commute with row selection]
Let $G$ be a Givens rotation or any other right-multiplied local column transformation. Then
\[
    (QG)_\sigma=Q_\sigma G.
\]
\end{lemma}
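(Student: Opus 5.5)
The plan is to reduce this statement to Lemma~\ref{lem:selected-row} together with one observation. A right-multiplied local column transformation $G$ acts only on the column index of $Q$. It therefore never mixes rows, so row selection and right multiplication should commute.

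\emph{Step 1.} Specialise Lemma~\ref{lem:selected-row}. Take the generic factor $S$ to be $G\in\mathbb{R}^{k\times k}$, where $k$ is the number of columns of $Q$. For a Givens rotation acting on columns $p$ and $q$, $G$ agrees with the identity except for the $2\times 2$ block
\[
\begin{bmatrix} c & -s\\ s & c\end{bmatrix}
\]
in rows and columns $p,q$. Any other local transformation, such as a sign flip, a product of rotations, or a deflation rotation restricted to an active column subset, is likewise a fixed $k\times k$ matrix. Lemma~\ref{lem:selected-row} places no restriction on $\sigma$ (unordered lists and repetitions are allowed) or on the structure of $S$. It therefore gives $(QG)_\sigma=Q_\sigma G$ directly.

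\emph{Step 2.} Give a self-contained entrywise check, mirroring the permutation lemma just above. Fix a requested row position $i$. Columns $j\notin\{p,q\}$ are unchanged on both sides and equal $Q_{\sigma(i),j}$. For columns $p$ and $q$, both sides equal $cQ_{\sigma(i),p}+sQ_{\sigma(i),q}$ and $-sQ_{\sigma(i),p}+cQ_{\sigma(i),q}$ respectively. The key point is that these entries depend only on row $\sigma(i)$ of $Q$.

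\emph{Step 3.} Remark on composition, since this is what the algorithm actually uses. Induction on the number of factors gives
\[
(QG_1G_2\cdots G_r)_\sigma=Q_\sigma G_1\cdots G_r.
\]
Combined with the permutation lemma, this shows that the full deflation sequence in \texttt{DLAED8\_BR}, consisting of sorting permutations followed by close-pole rotations, can be applied to the selected boundary rows alone.

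There is no real obstacle here; the statement is a direct corollary of Lemma~\ref{lem:selected-row}. The only point needing care is being precise about what ``local'' means: it must be \emph{right} multiplication, acting on columns. A left-multiplied (row-mixing) transformation would not commute with row selection in general. I would state this restriction explicitly in the proof.
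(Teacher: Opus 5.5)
Your proposal is correct and takes essentially the same route as the paper: the paper's proof simply invokes the entrywise argument of Lemma~\ref{lem:selected-row} and notes that a Givens rotation replaces only two columns, each by fixed linear combinations of entries from the same row, which is exactly your Steps~1--2. The composition remark in Step~3 is a harmless addition that the paper leaves implicit.
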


\begin{proof}
This is the same entrywise argument as Lemma~\ref{lem:selected-row}; a Givens rotation changes only two columns, and every affected entry is a fixed linear combination of the two selected-row entries.
\end{proof}

In an implementation, column signs, deflation-induced permutations, and sorting metadata must be applied consistently to both conventional D\&C and BR. Equivalently, one may regard the stored state as
\[
    B(Q,\pi,s)=\bigl(s\odot Q_{1,\pi},\;s\odot Q_{m,\pi}\bigr),
\]
where $\pi$ is a column permutation and $s$ is a column-sign convention. Under this convention, the algebraic equality in Theorem~\ref{thm:br-correct} is exact.

\subsection{Compact Delta Reconstruction}

For a secular root written as $\lambda=D_{\mathrm{org}}+\tau$, define
\[
    \Delta_i=D_i-D_{\mathrm{org}}-\tau.
\]
Near-pole entries may be cached to avoid cancellation, while far entries are reconstructed by the formula above.

\begin{lemma}[Compact delta correctness]
Assume the compact representation has legal near indices, non-near indices equal the complement of the cached near set, and cached near entries equal the corresponding exact $\Delta_i$. If every non-near entry is reconstructed as $D_i-D_{\mathrm{org}}-\tau$, then the reconstruction returns the exact vector $\Delta$ in exact arithmetic.
\end{lemma}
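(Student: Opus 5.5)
The plan is a direct index-by-index comparison between the reconstructed vector and the exact vector $\Delta$. Let $N\subseteq\{1,\dots,K\}$ be the cached near set and $\widehat\Delta$ the vector returned by reconstruction. The first step is to use the legality assumption: every index in $N$ lies in $\{1,\dots,K\}$. Together with the assumption that the non-near indices form the complement $\{1,\dots,K\}\setminus N$, this shows that $N$ and its complement partition the index set. So each $i$ is handled by exactly one branch of the reconstruction: either the cached value is read, or the formula is evaluated. No index is missed, and no index is written twice with conflicting values.

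The second step is the case split.
\begin{itemize}
\item If $i\in N$, then $\widehat\Delta_i$ is the cached entry. By hypothesis this equals $D_i-D_{\mathrm{org}}-\tau=\Delta_i$.
\item If $i\notin N$, then $\widehat\Delta_i=D_i-D_{\mathrm{org}}-\tau$. In exact arithmetic this is literally the defining expression for $\Delta_i$.
\end{itemize}
Hence $\widehat\Delta_i=\Delta_i$ for every $i$, and the two vectors agree.

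I would add one remark. Because $\lambda=D_{\mathrm{org}}+\tau$, we have $D_i-D_{\mathrm{org}}-\tau=D_i-\lambda$, so the reconstructed vector is exactly the denominator vector used in the secular-vector formula $y_j(i)$. This lets the lemma feed directly into the selected-row update of Lemma~\ref{lem:selected-row}.

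The only real obstacle is the bookkeeping in the first step. If the near indices could repeat or fall outside the range, two problems arise: an entry could be overwritten or left uninitialized, and the complement description would no longer cover every index. I would therefore state the partition property explicitly before the case split. Beyond that, the argument is a one-line check in each case.
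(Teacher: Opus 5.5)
Your proposal is correct and matches the paper's proof: the same case split into cached near entries (exact by hypothesis) and non-near entries (given literally by the defining formula for $\Delta_i$). Your explicit partition step only spells out what the paper's two-line argument uses implicitly from the legality and complement assumptions.
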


\begin{proof}
For a near index, the returned value is the cached exact entry. For a non-near index, the returned value is exactly the defining formula for $\Delta_i$. Thus all entries match.
\end{proof}

\begin{lemma}[Pointwise delta error to vector error]
If $|\widehat\Delta_i-\Delta_i|\le\eta$ for all $i$, then
\[
    \norm{\widehat\Delta-\Delta}_2\le \sqrt n\,\eta.
\]
\end{lemma}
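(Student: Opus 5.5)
The plan is to expand the squared Euclidean norm entrywise and bound each term by the pointwise hypothesis. Here $\Delta,\widehat\Delta\in\mathbb{R}^n$, one entry per pole, and we assume $|\widehat\Delta_i-\Delta_i|\le\eta$ for every $i\in\{1,\dots,n\}$. No property of the compact representation is needed beyond this uniform entrywise bound. In particular, the bound does not depend on whether an entry came from the cached near set or from the reconstruction formula $D_i-D_{\mathrm{org}}-\tau$, so the two kinds of indices are never treated separately.

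First, write
\[
    \norm{\widehat\Delta-\Delta}_2^2=\sum_{i=1}^n|\widehat\Delta_i-\Delta_i|^2 .
\]
Each summand is at most $\eta^2$, because squaring is monotone on nonnegative reals. The sum therefore has $n$ terms each bounded by $\eta^2$, which gives $\norm{\widehat\Delta-\Delta}_2^2\le n\eta^2$. Taking nonnegative square roots, which is also monotone, yields $\norm{\widehat\Delta-\Delta}_2\le\sqrt n\,\eta$.

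An equivalent route is to note $\norm{x}_2\le\sqrt n\,\norm{x}_\infty$ for $x\in\mathbb{R}^n$ and apply it to $x=\widehat\Delta-\Delta$, since the hypothesis says exactly $\norm{\widehat\Delta-\Delta}_\infty\le\eta$. I would present the direct summation argument because it is self-contained.

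There is no real obstacle here. The only point needing care is to make sure $n$ is the length of the delta vector, i.e.\ the number of poles in the local secular problem, which could be the active rank $K$. The later Frobenius-type bounds should then use the same dimension when they invoke this lemma.
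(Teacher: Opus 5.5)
Your argument is correct and is exactly the paper's proof: square the entrywise bound, sum the $n$ terms to get $\norm{\widehat\Delta-\Delta}_2^2\le n\eta^2$, and take square roots. Your remark that $n$ is the length of the delta vector (possibly the active rank $K$) is a sensible clarification that the paper leaves implicit.
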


\begin{proof}
Square the pointwise bound, sum over $i$, and take square roots.
\end{proof}

\begin{lemma}[Far-entry absolute rounding bound]
Let a far entry be computed as
\[
    \widehat\Delta_i
    =
    \fl(\fl(\widehat D_i-\widehat D_{\mathrm{org}})-\widehat\tau),
\]
and assume $\fl(x-y)=(x-y)(1+\delta)$ with $|\delta|\le u$. If
\[
    |\widehat D_i-D_i|\le \eta_D,\quad
    |\widehat D_{\mathrm{org}}-D_{\mathrm{org}}|\le \eta_D,\quad
    |\widehat\tau-\tau|\le \eta_\tau,
\]
then
\[
    |\widehat\Delta_i-\Delta_i|
    \le
    u |\widehat D_i-\widehat D_{\mathrm{org}}|
    +u |\fl(\widehat D_i-\widehat D_{\mathrm{org}})-\widehat\tau|
    +2\eta_D+\eta_\tau.
\]
\end{lemma}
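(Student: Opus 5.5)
The plan is to split the error $\widehat\Delta_i-\Delta_i$ into two pieces. The first is the rounding committed by the two floating-point subtractions. The second is the propagated perturbation coming from the computed inputs $\widehat D_i$, $\widehat D_{\mathrm{org}}$, and $\widehat\tau$. We insert the intermediate quantities $a=\widehat D_i-\widehat D_{\mathrm{org}}$ (exact difference of computed inputs) and $\widehat a=\fl(a)$. Then
\[
    \widehat\Delta_i=\fl(\widehat a-\widehat\tau),\qquad
    \widetilde\Delta_i=\widehat D_i-\widehat D_{\mathrm{org}}-\widehat\tau,
\]
where $\widetilde\Delta_i$ is the delta computed exactly from perturbed data. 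The triangle inequality gives
\[
    |\widehat\Delta_i-\Delta_i|\le|\widehat\Delta_i-\widetilde\Delta_i|+|\widetilde\Delta_i-\Delta_i|.
\]

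For the second term, we subtract the defining formula $\Delta_i=D_i-D_{\mathrm{org}}-\tau$ from $\widetilde\Delta_i$. This leaves $(\widehat D_i-D_i)-(\widehat D_{\mathrm{org}}-D_{\mathrm{org}})-(\widehat\tau-\tau)$. By the three hypotheses it is bounded by $2\eta_D+\eta_\tau$.

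For the first term, we apply the model $\fl(x-y)=(x-y)(1+\delta)$ twice. First, $\widehat a=a(1+\delta_1)$, so $\widehat a-a=\delta_1 a$. Second, $\widehat\Delta_i=(\widehat a-\widehat\tau)(1+\delta_2)$, so $\widehat\Delta_i-(\widehat a-\widehat\tau)=\delta_2(\widehat a-\widehat\tau)$. Writing
\[
    \widehat\Delta_i-\widetilde\Delta_i=\bigl[\widehat\Delta_i-(\widehat a-\widehat\tau)\bigr]+(\widehat a-a),
\]
we bound the two brackets by $u|\fl(\widehat D_i-\widehat D_{\mathrm{org}})-\widehat\tau|$ and $u|\widehat D_i-\widehat D_{\mathrm{org}}|$ respectively, using $|\delta_1|,|\delta_2|\le u$. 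Summing with the propagated term yields exactly the stated bound.

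The only delicate point is the bookkeeping of the intermediate quantity. The second rounding error must be expressed in terms of the computed intermediate $\fl(\widehat D_i-\widehat D_{\mathrm{org}})$ rather than the exact one, because that is how the statement is phrased and no $u^2$ cross term then appears. The first rounding error must be charged at $a$, and it passes through the second subtraction without amplification because it enters additively. Beyond that, the argument is a direct triangle-inequality computation with no case split.
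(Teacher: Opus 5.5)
Your proof is correct and matches the paper's argument: the paper also sets $a=\widehat D_i-\widehat D_{\mathrm{org}}$ and $b=\fl(a)-\widehat\tau$, and bounds the three pieces $\fl(b)-b$, $\fl(a)-a$, and $(a-\widehat\tau)-\Delta_i$ by $u|b|$, $u|a|$, and $2\eta_D+\eta_\tau$. The only difference is bookkeeping. The paper splits as $|\fl(b)-b|+|b-\Delta_i|$, while you split into rounding error versus propagated input error, and this does not matter.
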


\begin{proof}
Let $a=\widehat D_i-\widehat D_{\mathrm{org}}$ and $b=\fl(a)-\widehat\tau$. Then
\[
|\widehat\Delta_i-\Delta_i|
\le |\fl(b)-b|+|b-\Delta_i|
\le u|b|+u|a|+2\eta_D+\eta_\tau.
\]
Substituting $a$ and $b$ gives the result.
\end{proof}

This is an absolute error bound. A relative bound for far entries requires a separation condition such as
\[
    |\Delta_i|\ge
    \alpha\left(|D_i-D_{\mathrm{org}}|+|\tau|\right),
\]
for a positive constant $\alpha$.

\subsection{Selected-Row Rounding Error}

\begin{assumption}[Dot-product error model]
For a length-$k$ dot product, the implementation satisfies
\[
    \left|
    \flDot(x,y)-\sum_{t=1}^k x_ty_t
    \right|
    \le
    \gamma_k\sum_{t=1}^k |x_ty_t|.
\]
For sequential summation with $ku<1$, one may take $\gamma_k=ku/(1-ku)$.
\end{assumption}

\begin{proposition}[Selected-row update with input perturbation]
Let $Y=Q_\sigma S$ and let $\widehat Y=\flMatMul(\widehat Q_\sigma,\widehat S)$. If every output entry satisfies
\[
    |\widehat Y_{ij}-(\widehat Q_\sigma\widehat S)_{ij}|
    \le \eta_{\mathrm{round}},
\]
and
\[
    |(\widehat Q_\sigma\widehat S)_{ij}-(Q_\sigma S)_{ij}|
    \le \eta_{\mathrm{input}},
\]
then
\[
    \norm{\widehat Y-Y}_F
    \le
    \sqrt{r\ell}\,
    (\eta_{\mathrm{round}}+\eta_{\mathrm{input}}),
\]
where $Q_\sigma\in\mathbb{R}^{r\times k}$ and $S\in\mathbb{R}^{k\times \ell}$.
\end{proposition}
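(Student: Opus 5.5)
The plan is a triangle inequality applied entrywise, followed by the pointwise-to-Frobenius step already used in the lemma on pointwise delta error. For each output position $(i,j)$ with $1\le i\le r$ and $1\le j\le \ell$, insert the intermediate exact product of the \emph{perturbed} inputs, $(\widehat Q_\sigma\widehat S)_{ij}$. This splits the total error into two parts:
\[
    |\widehat Y_{ij}-Y_{ij}|
    \le
    |\widehat Y_{ij}-(\widehat Q_\sigma\widehat S)_{ij}|
    +|(\widehat Q_\sigma\widehat S)_{ij}-(Q_\sigma S)_{ij}|.
\]
The first term is the floating-point rounding of the selected-row product. The second is the propagation of input error through exact arithmetic. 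By the two hypotheses, each entry is therefore bounded by $\eta_{\mathrm{round}}+\eta_{\mathrm{input}}$. Here we use $Y=Q_\sigma S$, which by Lemma~\ref{lem:selected-row} also equals $(QS)_\sigma$, so the bound applies to the requested parent rows.

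Next we convert the uniform entrywise bound into a Frobenius bound. $\widehat Y-Y$ is an $r\times\ell$ matrix. We square the pointwise bound, sum over all $r\ell$ entries, and take square roots:
\[
    \norm{\widehat Y-Y}_F^2=\sum_{i,j}|\widehat Y_{ij}-Y_{ij}|^2\le r\ell\,(\eta_{\mathrm{round}}+\eta_{\mathrm{input}})^2.
\]
This is the same argument as the pointwise delta lemma, applied with $r\ell$ entries instead of $n$.

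There is no real obstacle, since the statement takes both pointwise bounds as hypotheses. The only point needing care is the choice of intermediate quantity. It must be $\widehat Q_\sigma\widehat S$ in exact arithmetic, not $Q_\sigma\widehat S$, so that the two hypotheses apply verbatim.

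As a remark, I would note how the hypotheses are realized in practice. Under the dot-product error model assumption, one can take $\eta_{\mathrm{round}}=\gamma_k\max_{i,j}\sum_t|(\widehat Q_\sigma)_{it}\widehat S_{tj}|$. For $\eta_{\mathrm{input}}$, one bounds the difference $\widehat Q_\sigma\widehat S-Q_\sigma S=(\widehat Q_\sigma-Q_\sigma)\widehat S+Q_\sigma(\widehat S-S)$ entrywise via Cauchy--Schwarz. These instantiations are not needed for the proposition itself.
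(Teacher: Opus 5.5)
Your proof is correct and matches the paper's argument. Like the paper, you insert the exact product $\widehat Q_\sigma\widehat S$, apply the triangle inequality entrywise to get the bound $\eta_{\mathrm{round}}+\eta_{\mathrm{input}}$, and then sum the squared entrywise bounds over the $r\ell$ entries to obtain the Frobenius bound.
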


\begin{proof}
By the triangle inequality, every entry of $\widehat Y-Y$ is bounded by $\eta_{\mathrm{round}}+\eta_{\mathrm{input}}$. The Frobenius bound follows by summing the squared entrywise bounds over the $r\ell$ entries.
\end{proof}

The dot-product model gives $\eta_{\mathrm{round}}=\gamma_kM_{\mathrm{abs}}$ whenever
\[
    \sum_{t=1}^k |(\widehat Q_\sigma)_{it}\widehat S_{tj}|
    \le M_{\mathrm{abs}}
\]
for all actual dot products. This quantity must bound the sum of absolute products, not the cancellation-reduced exact dot product.

\subsection{WPROD Product Chains}

Some secular-vector entries are computed as products
\[
    W=W_0\prod_{i=1}^m f_i.
\]
Assume every effective factor satisfies
\[
    \widehat f_i=f_i(1+\delta_i),\qquad |\delta_i|\le u.
\]
Then
\[
    \widehat W=W(1+\theta),\qquad
    \theta=\prod_{i=1}^m(1+\delta_i)-1.
\]
If $0\le u\le1$, then
\[
    |\theta|
    \le
    \Gamma_m^\pm(u)
    =
    \max\{(1+u)^m-1,\;1-(1-u)^m\}.
\]
This proves only composition of relative factor perturbations; stability of the individual factor evaluations must be supplied separately.

\subsection{Tree-Level Propagation}

\begin{theorem}[Conditional boundary-error propagation]
Assume each leaf $\ell$ satisfies $\norm{\widehat B_\ell-B_\ell}\le E_\ell$. For every internal node $v$, assume the local merge routine has a declared output bound
\[
    \norm{\widehat B_v-B_v}
    \le
    F_v(\eta_L,\eta_R,\eta_D,\eta_\rho,\eta_{\mathrm{round}},\kappa_v)
    \equiv E_v,
\]
whenever the left and right child boundary errors satisfy their input bounds. Then every node in the BR merge tree satisfies its declared bound. Moreover, every internal secular vector satisfies
\[
    \norm{\widehat z_v-z_v}_2
    \le
    \sqrt{\eta_{L(v)}^2+\eta_{R(v)}^2}
    \le
    \eta_{L(v)}+\eta_{R(v)}.
\]
\end{theorem}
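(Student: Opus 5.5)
The plan is a structural induction over the BR merge tree, processing nodes in bottom-up (post-order) order, which is the order in which Algorithm~\ref{alg:boundary-row-dc} computes them. The claim to establish at each node $v$ is $\norm{\widehat B_v-B_v}\le E_v$. At a leaf $\ell$ this is exactly the hypothesis $\norm{\widehat B_\ell-B_\ell}\le E_\ell$, so the base case is immediate.

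For an internal node $v$ with children $L(v)$ and $R(v)$, the induction hypothesis gives $\norm{\widehat B_{L(v)}-B_{L(v)}}\le E_{L(v)}$ and $\norm{\widehat B_{R(v)}-B_{R(v)}}\le E_{R(v)}$. Since $\bhi_{L(v)}$ and $\blo_{R(v)}$ are components of the boundary states, I will fix the norm on boundary states so that a component is dominated by the whole. Concretely, take $\norm{B}=(\norm{\blo}_2^2+\norm{\bhi}_2^2)^{1/2}$, applied after the consistent permutation and sign convention $B(Q,\pi,s)$, so that comparing $\widehat B$ with $B$ is meaningful. With this choice, $\eta_{L(v)}:=E_{L(v)}$ and $\eta_{R(v)}:=E_{R(v)}$ are valid input bounds for the child boundary rows. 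The remaining inputs $\eta_D,\eta_\rho,\eta_{\mathrm{round}},\kappa_v$ are part of the declared local interface, so the assumed implication for the local merge routine applies directly and yields $\norm{\widehat B_v-B_v}\le F_v(\cdots)=E_v$. Because the tree is finite, the induction covers every node, including the root. At the root only eigenvalues are returned, so its bound concerns the secular problem rather than propagated rows.

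For the secular-vector bound I reuse Lemma~\ref{lem:boundary-sufficiency} and Eq.~\eqref{eq:z-error}. Since $z_v=\concat(\bhi_{L(v)},\blo_{R(v)})$ and, by Theorem~\ref{thm:br-correct}, $\widehat z_v$ is formed by the same concatenation of computed rows, we get
\[
\norm{\widehat z_v-z_v}_2^2=\norm{\widehat{\bhi}_{L(v)}-\bhi_{L(v)}}_2^2+\norm{\widehat{\blo}_{R(v)}-\blo_{R(v)}}_2^2\le\eta_{L(v)}^2+\eta_{R(v)}^2 .
\]
Concatenation itself involves no rounding, so no extra term appears. The final inequality $\sqrt{a^2+b^2}\le a+b$ for $a,b\ge0$ follows from $(a+b)^2=a^2+b^2+2ab$.

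The main obstacle is bookkeeping rather than analysis. First, the norm on $B$ must be fixed so that component errors are bounded by $\eta_{L(v)},\eta_{R(v)}$. Second, the exact and computed states must use the same permutation, sign, and deflation metadata, since otherwise $\widehat z_v-z_v$ would not be a componentwise difference. I would state this explicitly as an assumption, namely that both sides share the metadata as in the convention preceding Theorem~\ref{thm:br-correct}, and note that deflation decisions which differ between the exact and computed runs fall outside the theorem.
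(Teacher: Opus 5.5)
Your proposal is correct and follows the paper's proof: induction on tree height with leaves as the base case, the induction hypothesis supplying the child boundary bounds, the assumed local merge bound $F_v$ giving $\norm{\widehat B_v-B_v}\le E_v$, and Eq.~\eqref{eq:z-error} giving the secular-vector estimate. Your explicit choice of a norm on $B$ that dominates each component, and your requirement of shared permutation, sign, and deflation metadata, make precise what the paper leaves implicit when it says the induction hypothesis ``supplies the child boundary-error bounds.''
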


\begin{proof}
The proof is induction on tree height. Leaf nodes satisfy the hypothesis. For an internal node, the induction hypothesis supplies the child boundary-error bounds. Equation~\eqref{eq:z-error} gives the secular-vector perturbation, and the assumed local merge bound gives $\norm{\widehat B_v-B_v}\le E_v$.
\end{proof}

\begin{theorem}[Root values-only error decomposition]
Let $P=(D,z,\rho)$ be the exact root secular problem and $\widehat P$ the BR-computed perturbed problem. If
\[
    \norm{\widehat P-P}_P\le \eta_{\mathrm{prob}},
\]
and the local eigenvalue map obeys
\[
    \norm{\lambda(\widehat P)-\lambda(P)}_2
    \le
    \kappa_{\mathrm{pert}}\eta_{\mathrm{prob}},
\]
while the root solver obeys
\[
    \norm{\widehat\lambda-\lambda(\widehat P)}_2
    \le
    \eta_{\mathrm{solver}},
\]
then
\[
    \norm{\widehat\lambda-\lambda(P)}_2
    \le
    \eta_{\mathrm{solver}}+
    \kappa_{\mathrm{pert}}\eta_{\mathrm{prob}}.
\]
\end{theorem}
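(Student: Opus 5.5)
The plan is a single triangle-inequality split through the intermediate quantity $\lambda(\widehat P)$. This is the vector of exact eigenvalues of the perturbed root problem $A(\widehat P)=\diag(\widehat D)+\widehat\rho\,\widehat z\widehat z^T$. Both hypotheses are stated about exactly this object: the solver bound compares $\widehat\lambda$ with $\lambda(\widehat P)$, and the perturbation bound compares $\lambda(\widehat P)$ with $\lambda(P)$. So I would write
\[
    \widehat\lambda-\lambda(P)
    =
    \bigl(\widehat\lambda-\lambda(\widehat P)\bigr)
    +\bigl(\lambda(\widehat P)-\lambda(P)\bigr),
\]
take the Euclidean norm, and apply the triangle inequality in $\mathbb{R}^n$.

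The first term is bounded directly by the solver hypothesis, giving $\eta_{\mathrm{solver}}$. For the second term, the hypothesis $\norm{\widehat P-P}_P\le\eta_{\mathrm{prob}}$ is exactly the input required by the assumed local eigenvalue-map bound, which yields $\kappa_{\mathrm{pert}}\eta_{\mathrm{prob}}$. Summing the two bounds gives the claim.

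The only point needing care is bookkeeping rather than analysis. The vectors $\widehat\lambda$, $\lambda(\widehat P)$ and $\lambda(P)$ must be indexed consistently, so that the same root index is compared across all three. This is the same fixed ordering and deflation convention assumed in Theorem~\ref{thm:br-correct} and in the metadata convention $B(Q,\pi,s)$. I would state it explicitly, for instance by taking all three vectors sorted ascending, so that the two hypotheses refer to compatible orderings.

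It is also worth recording where $\eta_{\mathrm{prob}}$ comes from, as a remark rather than as part of the proof. Its $z$-component is controlled by Eq.~\eqref{eq:z-error}, using the child bounds supplied by the conditional boundary-error propagation theorem. The root eigenvalue bound therefore chains with the tree-level result, but this theorem itself only needs the stated hypotheses.
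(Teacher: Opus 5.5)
Your proposal is correct and matches the paper's proof: both insert the intermediate quantity $\lambda(\widehat P)$, apply the triangle inequality, and bound the two resulting terms by the solver hypothesis and the perturbation hypothesis. Your explicit requirement that all three eigenvalue vectors be indexed consistently is a sensible clarification that the paper leaves implicit.
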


\begin{proof}
Apply the triangle inequality:
\[
\norm{\widehat\lambda-\lambda(P)}_2
\le
\norm{\widehat\lambda-\lambda(\widehat P)}_2+
\norm{\lambda(\widehat P)-\lambda(P)}_2.
\]
The two terms are bounded by the solver and perturbation assumptions.
\end{proof}

For a simplified balanced-tree model with per-level additive local error $\mu$, the loose recurrence $E_{h+1}=2E_h+\mu$ gives
\[
    E_h=2^h\eta_0+(2^h-1)\mu.
\]
Using the sharper concatenation bound gives $U_{h+1}=\sqrt2U_h+\mu$ and
\[
    U_h=(\sqrt2)^h\eta_0+
    \frac{(\sqrt2)^h-1}{\sqrt2-1}\mu.
\]
These recurrences are illustrative upper-bound models; the actual local Lipschitz constants and secular conditioning are contained in $F_v$.

\end{document}